\newcommand{\myparskip}{3pt}
\newtheorem{lemma}{Lemma}[section]
\newtheorem{theorem}[lemma]{Theorem}
\newtheorem{corollary}[lemma]{Corollary}
\newtheorem{prop}[lemma]{Proposition}
\newtheorem{remark}[lemma]{Remark}
\newtheorem{prob}{Problem}
\newtheorem{question}{Question}
\newcommand{\etal}{{\em et al.}~}
\DeclareMathOperator*{\Ex}{\mathbb{E}}
\renewenvironment{proof}{\vspace{-0.1in}\noindent{\bf Proof:}}%
        {\hspace*{\fill}$\Box$\par}
        {\hspace*{\fill}$\Box$\par}
        {\hspace*{\fill}$\Box$\par}
\def\eps{\varepsilon}
\def\bx{\textbf{x}}
\def\MC{\textsc{Edge-wt-MC}\xspace}
\def\DirMC{\textsc{Dir-MC}\xspace}
\def\2DirMC{\text{$st$-Bi-Cut}\xspace}
\def\NodeMC{\textsc{Node-wt-MC}\xspace}
\def\DirMCRel{\textsc{Dir-MC-Rel}\xspace}
\def\NodeMC{\textsc{Node-wt-MC}\xspace}
\def\NodeMCRel{\textsc{Node-MC-Rel}\xspace}
\def\DirMulticut{\textsc{Dir-Multicut}\xspace}
\def\SubMP{\textsc{Sub-MP}\xspace}
\def\SymSubMP{\textsc{Sym-Sub-MP}\xspace}
\def\AHMCfull{\textsc{Hypergraph Multiway Cut}\xspace}
\def\AHMC{\textsc{Hypergraph-MC}\xspace}
\newcommand{\cP}{\mathcal{P}}
\begin{document}
\title{Simple and Fast Rounding Algorithms for Directed and Node-weighted Multiway Cut}

\author{
Chandra Chekuri
\thanks{
Dept. of Computer Science, University of Illinois, Urbana, IL 61801.
Supported in part by NSF grant CCF-1319376. {\tt chekuri@illinois.edu}}
\and
Vivek Madan
\thanks{
Dept. of Computer Science, University of Illinois, Urbana, IL 61801.
Supported in part by NSF grant CCF-1319376.
{\tt vmadan2@illinois.edu}}
}

\date{\today}

\maketitle

\thispagestyle{empty}
\begin{abstract}
  We study the multiway cut problem in {\em directed} graphs and one
  of its special cases, the {\em node-weighted} multiway cut problem
  in {\em undirected} graphs.  In {\sc Directed Multiway Cut} (\DirMC)
  the input is an edge-weighted directed graph $G=(V,E)$ and a set of
  $k$ terminal nodes $\{s_1,s_2,\ldots,s_k\} \subseteq V$; the goal is
  to find a min-weight subset of edges whose removal ensures that there
  is no path from $s_i$ to $s_j$ for any $i \neq j$. In {\sc
    Node-weighted Multiway Cut} (\NodeMC) the input is a node-weighted
  undirected graph $G$ and a set of $k$ terminal nodes
  $\{s_1,s_2,\ldots,s_k\} \subseteq V$; the goal is to remove a
  min-weight subset of nodes to disconnect each pair of
  terminals. \DirMC admits a $2$-approximation \cite{NaorZ01}
  and \NodeMC admits a $2(1-\frac{1}{k})$-approximation
  \cite{GargVY04}, both via rounding of LP relaxations.
  Previous rounding algorithms for these problems, from nearly twenty
  years ago, are based on careful rounding of an {\em
    optimum} solution to an LP relaxation. This is particularly true
  for \DirMC for which the rounding relies on a custom LP formulation
  instead of the natural distance based LP relaxation \cite{NaorZ01}.

  In this paper we describe extremely simple and near linear-time
  rounding algorithms for \DirMC and \NodeMC via a natural distance
  based LP relaxation. The dual of this relaxation is a special case
  of the maximum  multicommodity flow problem. Our
  algorithms achieve the same bounds as before but have the
  significant advantage in that they can work with {\em any feasible}
  solution to the relaxation. Consequently, in addition to obtaining
  ``book'' proofs of LP rounding for these two basic problems, we also
  obtain significantly faster approximation algorithms by taking
  advantage of known algorithms for computing near-optimal solutions
  for maximum multicommodity flow problems. We also investigate 
  lower bounds for \DirMC when $k=2$ and in particular prove that
  the integrality gap of the LP relaxation is $2$ even in directed
  planar graphs.
\end{abstract}

\newpage
\section{Introduction}
We study several variants of the multiway cut problem in graphs (also
referred to as the mult-terminal cut problem). In the classical
$s$-$t$ cut problem the input consists of a graph $G=(V,E)$ and two
distinct nodes $s,t$; the goal is to separate $s$ from $t$ by removing
a minimum cost set of edges and/or nodes. In the multiway cut problem
the input is a graph $G=(V,E)$ and a set $S = \{s_1,s_2,\ldots,s_k\}$
of $k$ nodes from $V$ called terminals; the goal is to separate the terminals
from each other at minimum cost by removing edges and/or nodes. We
describe the three main variants that are of interest to us.

\medskip 
\noindent {\sc Multiway Cut} (\MC): The input is an undirected
graph $G=(V,E)$ along with non-negative edge weights $w(e), e \in E$
and a set $\{s_1,\ldots,s_k\} \subseteq V$ of terminals. The goal
is to find a min-cost set of edges $E' \subseteq E$ such that in $G
- E'$ there is no path from $s_i$ to $s_j$ for $i \neq j$.

\medskip 
\noindent {\sc Node-Weighted Multiway Cut} (\NodeMC): The input is an
undirected graph $G=(V,E)$ along with non-negative node weights $w(v),
v \in V$ and a set $\{s_1,\ldots,s_k\} \subseteq V$ of terminals. The
goal is to find a min-cost set of nodes $V' \subseteq V$ such that
in $G - V'$ there is no path from $s_i$ to $s_j$ for $i \neq j$.\footnote{
In this definition terminals are allowed to be removed. If they are not allowed
to be removed we can simply make their weight $\infty$.}

\medskip 
\noindent {\sc Directed Multiway Cut} (\DirMC): The input is a
directed graph $G=(V,E)$ along with non-negative edge weights $w(e),
e \in E$ and a set $\{s_1,\ldots,s_k\} \subseteq V$ of terminals. The
goal is to find a min-cost set of edges $E' \subseteq E$ such that
in $G - E'$ there is no path from $s_i$ to $s_j$ for $i \neq j$.

\begin{remark}
  \DirMC with $k=2$ is \emph{not} the same as the $s$-$t$ cut problem.
  The goal is to separate $s_1$ from $s_2$ \emph{and} $s_2$ from $s_1$.
  In fact \DirMC with $k=2$ is NP-Hard~\cite{GargVY94}. 
\end{remark}

The complexity of the multiway cut problem and its variants have been
extensively studied since the paper of Dahlhaus \etal
\cite{DahlhausJPSY92}. They showed that \MC with $k=3$ is NP-Hard; it
was later observed that the problem is also APX-hard to approximate. This
is in contrast to the case of $k=2$ which can be solved in 
polynomial-time in undirected graphs via a reduction to the
$s$-$t$ minimum-cut problem.

\MC reduces in an approximation preserving fashion to \NodeMC which in
turn reduces in an approximation preserving fashion to \DirMC
\cite{GargVY04}; it is also easy to see that in the directed case,
node-weighted and edge-weighted versions are equivalent. The current
best approximation ratio for \MC stands at $1.2965$ due to Sharma and
Vondr\'ak \cite{SharmaV14}. For \NodeMC a $2(1-1/k)$ approximation is
known from the work of Garg, Vazirani and Yannakakis \cite{GargVY04},
and for \DirMC a $2$ approximation is known from the work of Naor and
Zosin \cite{NaorZ01}.  {\sc Vertex Cover} reduces to \NodeMC and
\DirMC in an approximation preserving
fashion~\cite{GargVY04}. Assuming $P \neq NP$ {\sc Vertex Cover} is
hard to approximate to within a factor of $1.36$ \cite{DinurS05}, and
assuming the Unique Games Conjecture it is hard to approximate to
within a factor of $(2-\eps)$ for any fixed $\eps > 0$
\cite{KhotR08}. These hardness results apply to \NodeMC and \DirMC and
show that \MC is provably easier to approximate than them.

Our focus in this paper is on approximation algorithms for \NodeMC and
\DirMC. The known algorithms are based on rounding suitable LP
relaxations for the problems. For both problems there is a simple and
natural LP relaxation based on distance variables on nodes/edges; see
Section~\ref{sec:dir-mc} and \ref{sec:node-mc}. (We note that a
similar relaxation applies to the more general {\sc Multicut} problem
and that dual of the LP relaxation corresponds to the LP for maximum
multicommodity flow.)  For \NodeMC the algorithm of Garg,
Vazirani and Yannakakis \cite{GargVY04} shows that any {\em optimum}
solution to the relaxation can be converted to a half-integral optimum
solution which can then be rounded easily.  The situation for \DirMC
is much more involved. Unlike the case of \NodeMC, half-integral
optimum solutions may not exist for the relaxation even for $k=2$.
Garg \etal \cite{GargVY94} obtained an $O(\log k)$-approximation via
the relaxation using ideas from approximation algorithms for multicut
\cite{GVY}.  Naor and Zosin obtained a $2$-approximation for \DirMC in
an elegant, surprising and somewhat mysterious fashion. They write a
different LP relaxation called the {\em relaxed multiway flow}
relaxation which is within a factor of $2$ of the natural relaxation,
and show that an {\em optimum} solution to this new relaxation can be
rounded without any loss in the approximation. This gives an indirect
proof that the natural relaxation has an integrality gap of at most
$2$. The proof of correctness crucially relies on complementary
slackness properties of the optimum solution and is partly inspired by
the ideas in \cite{GargVY04}. The idea of using a relaxed multiway
flow is inspired by earlier work on the subset feedback vertex problem
\cite{EvenNZ00}.

The algorithms of \cite{GargVY04} and \cite{NaorZ01} are from almost
twenty years ago. During this intervening years no alternative
algorithms or rounding schemes have been obtained for these basic
problems. We observe that for the case of \MC there is an extremely
simple rounding scheme that converts any fractional feasible solution
to a multiway cut with a loss of a factor of $2$ (see
\cite{Vazirani-book}). The algorithm picks a random $\theta \in
(0,1/2)$ and for each terminal $s_i$ removes the edges leaving the
ball $B(s_i,\theta)$ of nodes contained within a radius $\theta$
around $s_i$ (with respect to distances given by the LP solution);
more formally the output is $\bigcup_{i=1}^k \delta(B(s_i,\theta))$.

In this paper we show that very simple algorithms which are essentially
similar in spirit to the above scheme also work for \DirMC and \NodeMC!
\begin{itemize}
\item The rounding algorithms are extremely simple and natural to
  describe, and in retrospect also to analyze. 
\item The algorithms only require a feasible solution to the natural
  LP relaxation and not necessarily an optimum solution.
\item Given a feasible fractional solution, the rounding algorithms
  can be implemented in time that is similar to what is required for
  one single-source shortest path computation. The deterministic
  version requires an additional logarithmic factor.
\end{itemize}

In addition to algorithmic results we also obtain some lower bound
results for \DirMC with $k=2$; the goal is to separate $s$ from $t$
{\em and} $t$ from $s$ in a directed graph $G$; subsequently 
we refer to this special case as \2DirMC. We prove that the
natural LP relaxation has an integrality gap of $2$ for
\2DirMC even in {\em planar} directed graphs.

We believe that our algorithms and analysis will be useful for related
problems. Indeed one of our motivations for simplifying the rounding
schemes for \DirMC and \NodeMC came from attempts to obtain algorithms
for a problem with applications to network information theory
\cite{KKCV15}. A significant consequence of our rounding algorithms
are much faster approximation algorithms for \NodeMC and \DirMC in
both theory and practice. Solving the LP relaxations for \NodeMC and
\DirMC to optimality is quite challenging. The options are to use the
Ellipsoid method or to use a compact formulation with a very large
number of variables and constraints. As we remarked earlier, the dual
of the natural LP relaxation for these problems is the maximum
multicommodity flow problem. Combinatorial fully-polynomial time
approximation schemes for solving these multicommodity flow problems
have been extensively investigated in theoretical computer science and
mathematical programming with a number of techniques developed over
the years; we refer the reader to
\cite{PST95,GrigoriadisK94,Young95,Bienstock-book,GargK,Fleischer,BienstockI06,Madry10}. Thus,
a fast $(1+\eps)$-approximation for the LP relaxation for \NodeMC and
\DirMC can be obtained using these methods. The fastest theoretical
algorithms run in time $\tilde{O}(m^2/\eps^2)$ \cite{Fleischer,GargK}
or in even faster $\tilde{O}(mn/\eps^2)$ time \cite{Madry10} under
some mild conditions; here $m$ is the number of edges and $n$ is the
number of nodes in $G$ and $\tilde{O}$ suppresses poly-logarithmic
factors. Note that these running times are independent of $k$.  Our
rounding algorithms can convert such an approximate feasible solution
to an integral cut in near-linear time with a factor of $2$ loss in
the cost.  Thus, we can obtain provably fast $(2+\eps)$-approximation
algorithms. Since our focus is on the rounding algorithms we do not go
into further details of specific algorithms or running times for
solving the relaxation.

We refer the interested reader to quickly jump to
Section~\ref{sec:dir-mc} to see the simplicity of the rounding scheme
and its analysis for \DirMC that achieves a bound of $2$. This also
applies to \NodeMC via a simple reduction to \DirMC. We also discuss
some new observations on the hardness of the problem when $k=2$. In
Section~\ref{sec:node-mc} we give a slightly different rounding scheme
for \NodeMC that achieves an improved bound of $2(1-1/k)$, matching
the known ratio from \cite{GargVY04}.

\subsection{Other related work}
\label{subsec:related-work}

The natural LP relaxation for \MC has an integrality gap of $2(1-1/k)$.
Approximation algorithms for \MC received substantial attention
following the breakthrough work of Calinescu, Karloff and Rabani
\cite{CalinescuKR98}. They developed a new ``geometric'' LP relaxation
(henceforth referred to as the CKR-relaxation) which they used to
obtain a $(1.5-1/k)$-approximation. The integrality gap of the
CKR-relaxation, and consequently the approximation ratio, was improved
subsequently to $1.3438$ by Karger \etal \cite{KargerKSTY99}, to
$1.32388$ by Buchbinder \etal \cite{BuchbinderNS13}, and to 
the currently best known bound of $1.2965$ by Sharma and Vondr\'ak
\cite{SharmaV14}. For $k=3$ a tight bound of $12/11$ is known
\cite{CheungCT06,KargerKSTY99}. It is also known that assuming the
Unique Games Conjecture, for any fixed $k$, the approximability
threshold for \MC coincides with the integrality gap of the
CKR-relaxation \cite{ManokaranNRS08}.

The CKR-relaxation makes use of the observation that \MC can be viewed
as a partition problem where the goal is to partition the node set
$V(G)$ into $k$ parts $V_1,\ldots,V_k$ to minimize $\sum_{i=1}^k
w(\delta(V_i))$ subject to the constraint that for $1 \le i \le k$,
$s_i \in V_i$.  {\sc Submodular Multiway Partition} (\SubMP) is a
generalization from the setting of graphs to arbitrary submodular
functions. Here we are given a non-negative submodular function $f:2^V
\rightarrow \mathbb{R}^+$ over the ground set $V$ along with terminals
$\{s_1,\ldots,s_k\} \subset V$.  The goal is to partition $V$ into
$V_1,\ldots,V_k$ to minimize $\sum_{i=1}^k f(V_i)$ subject to the
constraint that $s_i \in V_i$ for $1 \le i \le k$.  If $f$ is
symmetric, as in the case of the undirected graph cut function, we
obtain the {\sc Symmetric Submodular Multiway Partition} (\SymSubMP)
problem. These problems were considered by Zhao, Nagamochi and Ibaraki
\cite{ZhaoNI05} who analyzed greedy-splitting algorithms, and more
recently by Chekuri and Ene \cite{ChekuriE11b} who used a
Lov\'asz-extension based convex relaxation. Interestingly, the convex
relaxation when specialized to \MC yields the CKR-relaxation.Chekuri and Ene
\cite{ChekuriE11b} obtained a $(1.5-1/k)$-approximation for \SymSubMP
and $2$-approximation for \SubMP. Ene, Vondr\'ak and Wu \cite{EneVW13}
improved the bound for \SubMP to $2(1-1/k)$ and also obtained
lower bound results in the oracle model.

\NodeMC cannot be viewed as a partition problem
directly. Nevertheless, it can be seen that \NodeMC
is equivalent to \AHMCfull problem (\AHMC) which is a generalization
of \MC from graphs to hypergraphs. \AHMC can be
cast as a special case of \SubMP (note that the reduction uses a
non-symmetric submodular function $f$) and thus \NodeMC can be
indirectly reduced to a partition problem. This leads to an
alternative $2(1-1/k)$-approximation for \NodeMC based on the
Lov\'asz-extension based relaxation for \AHMC. This relaxation does
not result in a better worst-case approximation than the
distance-based relaxation, however, it appears to be strictly stronger
in that it improves the approximation ratio in special some cases as
observed in \cite{ChekuriE11}. No fast approximation
algorithms are known to solve this convex relaxation.

Finally we mention the {\sc Multicut} problem where the goal is to
separate a given set of $k$ node-pairs $(s_1,t_1),\ldots,(s_k,t_k)$ in
a given graph at minimum-cost. One can consider undirected graphs with
edge weights, undirected graphs with node weights and directed graph
with edge weights.  These versions generalize the corresponding
multiway cut problems. The best known approximation ratio for {\sc
  Multicut} in undirected graphs is $O(\log k)$ \cite{GVY,GargVY94}
while the best known bounds in directed graphs is $\min(k,
\tilde{O}(n^{11/23}))$ \cite{AgarwalAC07}. Moreover, it is known from
the work of Chuzhoy and Khanna \cite{ChuzhoyK09} that the problem in
directed graphs is inapproximable to a factor better than
$\tilde{\Omega}(2^{\log^{1-\eps}n})$.

\section{LP Relaxation and rounding for \DirMC}
\label{sec:dir-mc}

\DirMC can be naturally formulated as an integer linear program with
variables $x_e \in \{0,1\}$, $e \in E$ which indicate whether $e$ is
cut or not. Let $\cP_{ij}$ be the set of all directed paths from $s_i$
to $s_j$ in $G$.  The constraint that $s_i$ is separated from $s_j$ by
the cut can be enforced by requiring that $\sum_{e \in p} x_e \ge 1$
for each $p \in \cP_{ij}$.  This leads to the following LP relaxation
where the integer constraint $x_e \in \{0,1\}$ is replaced by $x_e \in
[0,1]$. We can without loss of generality drop the constraint $x_e \le
1$.

\begin{figure}[htb]
  \centering
\begin{boxedminipage}{0.5\linewidth}
\vspace{-0.2in}
\begin{align*}
& \textbf{\DirMCRel}\\
\min \quad & \sum_{e \in E} w_e x_e & \\
   & \sum_{e \in p} x_e & \ge 1 & \qquad p \in \cP_{ij}, i \neq j \\
&  x_e & \ge 0 & \qquad e \in E
\end{align*}
\end{boxedminipage}
  \caption{LP Relaxation for \DirMC}
  \label{fig:dirmc-lp}
\end{figure}

The main result of the paper is the following theorem.

\begin{theorem}\label{thm:directed_cut_approximation}
  There is a randomized algorithm that given a
  feasible solution $\bx$ to {\sc \DirMCRel} returns a feasible
  integral solution of expected cost at most $2 \sum_e w_e x_e$, and
  runs in $O(m + n \log n)$ time. The algorithm can be 
  derandomized to yield a deterministic $2$-approximation algorithm 
  that runs in $O(m \log n)$ time. Here, $m = |E(G)|, n = |V(G)|$.
\end{theorem}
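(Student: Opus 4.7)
The plan is to mimic, in the directed setting, the classical undirected rounding scheme sketched in the introduction: interpret the feasible fractional solution $\bx$ as edge lengths, compute shortest-path distances from the terminals under these lengths, sample a random threshold $\theta$, and take as the cut the boundary edges of distance-balls around the terminals. LP feasibility translates directly into the metric statement that the distance from $s_i$ to $s_j$ is at least $1$ for every pair $i\neq j$, and this is the only structural fact needed from $\bx$. Concretely, I would pick $\theta$ uniformly in $(0,1/2)$, and for each terminal $s_i$ define a ball $B_i(\theta)$ (choosing either out-balls, in-balls, or a carefully combined in/out variant—the exact choice being the key design decision). The output cut is the union, over $i$, of the appropriately oriented boundary edges of $B_i(\theta)$.

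Feasibility will follow from a clean per-path argument: for any $i\neq j$ and any directed path $P$ from $s_i$ to $s_j$, since $s_i\in B_i(\theta)$ and $s_j\notin B_i(\theta)$ (as $d_i(s_j)\ge 1 > \theta$), the path $P$ must cross the boundary of $B_i(\theta)$, and the crossing edge is in the output cut. For the cost, the target is to show that for every edge $e=(u,v)$, the probability $P[e\in\text{cut}]$ is at most $2x_e$; summing and linearity of expectation then yield $\mathbb{E}[\text{cost}]\le 2\sum_e w_e x_e\le 2\cdot\optcr$. The factor $2$ comes from $\theta$ being uniform on an interval of length $1/2$, together with the fact that the relevant distance label changes by at most $x_e$ across a directed edge of length $x_e$.

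The hard step, and the place where the directed case departs from the undirected one, is the per-edge probability bound. In the undirected case the $\theta$-balls around distinct terminals are automatically disjoint for $\theta<1/2$, so each edge is charged to at most one ball and the $2x_e$ bound follows at once. In the directed case a naive out-ball (or in-ball) union does not enjoy this property: a single vertex may lie inside several out-balls of radius $<1/2$, which in principle could blow up the per-edge probability by a factor of~$k$. The crux of the analysis is therefore to choose the ball definition so that the events ``$e$ lies on the $i$-th boundary'' are essentially mutually exclusive in $\theta$, restoring a single-ball charging argument. I would expect this to use the LP inequality $d_i(s_j)\ge 1$ in tandem with both forward and backward distances, so that the offending overlaps ruled out in the undirected case are again ruled out in the directed case.

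Finally, for the running time, computing the needed distances reduces to one (or a constant number of) Dijkstra computations from super-source/super-sink nodes attached to all terminals with zero-weight edges; this takes $O(m+n\log n)$ time, and given the labels the cut is assembled in $O(m)$ time. Derandomization is obtained by observing that as $\theta$ varies over $(0,1/2)$ the cut only changes at $O(n)$ breakpoints (the distance values at the vertices), so we can enumerate the candidate cuts in sorted order with an appropriate data structure, maintain the cut cost incrementally, and pick the best, yielding the $O(m\log n)$ deterministic bound claimed in the theorem.
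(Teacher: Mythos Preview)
Your outline captures the high-level shape of the argument (LP as metric, random threshold, ball boundaries, per-edge probability $\le 2x_e$, linearity of expectation, Dijkstra for distances, sweep for derandomization), and you have correctly isolated the one real obstacle: in the directed case, out-balls of radius $<1/2$ around distinct $s_i$ need not be disjoint, so a naive union of $\delta^+(B(s_i,\theta))$ can charge an edge $k$ times. But you have not supplied the mechanism that removes this obstacle; you only speculate that ``forward and backward distances'' will somehow make the events mutually exclusive. That is the missing idea, and your guess about how it is resolved is not what actually works.

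The paper's resolution does not use backward distances at all. Instead, for each $i$ it introduces an auxiliary source $t_i$ with zero-length arcs $(t_i,s_j)$ for all $j\ne i$, picks $\theta$ uniformly in $(0,1)$ (not $(0,1/2)$), and cuts $\bigcup_i \delta^+(B(t_i,\theta))$. Feasibility is immediate since $B(t_i,\theta)$ contains every $s_j$ with $j\ne i$ but not $s_i$. The crucial observation for the cost is that $d(t_i,u)=\min_{j\ne i}d(s_j,u)$; hence, if the terminals are reindexed so that $d(s_1,u)\le d(s_2,u)\le\cdots$, then $d(t_1,u)=d(s_2,u)$ while $d(t_2,u)=\cdots=d(t_k,u)=d(s_1,u)$. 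Thus the $k$ intervals $[d(t_i,u),d(t_i,u)+x_e)$ collapse to just \emph{two} distinct intervals, each of length at most $x_e$, and with $\theta$ uniform on $(0,1)$ this gives $\Pr[e\in C]\le 2x_e$. Without this ``min over all but one'' collapse, your per-edge bound does not go through; concrete instances exist where the $k$ intervals $[d(s_i,u),d(s_i,u)+x_e)$ are pairwise disjoint and all lie in $(0,1/2)$, so the naive out-ball scheme pays $\Theta(k\,x_e)$.

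Two smaller points. First, the running-time claim hinges on computing, for each vertex, only the two smallest among $d(s_1,u),\ldots,d(s_k,u)$ (the $2$-nearest-terminal problem), which is what keeps it to essentially one Dijkstra rather than $k$; your single super-source suggestion gives only the minimum, not the second minimum. Second, for derandomization the relevant breakpoints are the interval endpoints $d(s_1,u)$, $d(s_1,u)+x_e$, $d(s_2,u)$, $d(s_2,u)+x_e$ over all edges $e=(u,v)$, which is $O(m)$ values, not $O(n)$; sorting these and sweeping gives the stated $O(m\log n)$ bound.
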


We now describe the simple randomized ball-cutting algorithm that
achieves the properties claimed by the theorem. Let $\bx$ be a
feasible solution to $\DirMCRel$. For any two nodes $u,v \in V$ we
define $d_x(u,v)$ be the shortest path length from $u$ to $v$ using
edge lengths given by $\bx$. For notational simplicity we omit the
subscript $x$ since there is little chance of confusion. The algorithm
adds new nodes $t_1,t_2,\ldots,t_k$ and adds the edge set $\{(t_i,s_j)
\mid i \neq j\}$ and sets the $x$ value of each of these new edges to
$0$. Note that, this is in effect a reduction of the \DirMC for the
given instance to a \DirMulticut instance which requires us to
separate the pairs $(t_i,s_i)$, $1 \le i \le k$. The solution $\bx$
augmented with the extra nodes and edges leads to a feasible
fractional solution for this \DirMulticut instance. Our algorithm,
formally described below, is very simple. We pick a random $\theta \in
(0,1)$ and take the union of the cuts defined by balls of radius
$\theta$ around each $t_i$. More formally let $B(v,r)$ be the set of
all nodes at distance at most $r$ from $v$.  Then the algorithm simply
outputs $\bigcup_{i=1}^k \delta^+(B(t_i,\theta))$ where $\delta^+(A)$
denote the set of outgoing edges from $A$.

\begin{algorithm}
	\caption{Rounding for \DirMC}
	\label{alg:directed_cut_rounding_scheme}
	\begin{algorithmic}[1]
		\STATE Given a feasible solution $\bx$
                to \DirMCRel
		\STATE Add new vertices $t_1,\dots,t_k$, edges $(t_i,s_j)$ 
              for all $i \neq j$ and set $x(t_i,s_j) = 0$ 
		\STATE Pick $\theta \in (0,1)$ uniformly at random
                \STATE $C = \cup_{i=1}^k \delta^+(B(t_i,\theta))$
		\STATE Return $C$
	\end{algorithmic}
\end{algorithm}

Note that $C$ is a random set of edges that depends on the choice of
$\theta$. We denote by $C(\theta)$ the set of edges output by the
algorithm for a given $\theta$.

\begin{lemma}
  \label{lem:dir-feasibility}
  If $\bx$ is a feasible fractional solution to \DirMCRel, $C(\theta)$
  is a feasible multiway cut for $\{s_1,\ldots,s_k\}$ for any $\theta
  \in (0,1)$. Thus, Algorithm \ref{alg:directed_cut_rounding_scheme}
  always returns a feasible integral solution given a feasible $\bx$.
\end{lemma}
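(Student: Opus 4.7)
The plan is to exhibit, for each ordered pair of distinct terminals $(s_i,s_j)$, a single ball among those defining $C(\theta)$ whose outgoing cut separates $s_i$ from $s_j$. The natural candidate is $B(t_j,\theta)$, and the proof reduces to two distance computations in the augmented graph followed by the standard ``a path from inside to outside a set must use an edge of $\delta^+$'' observation.

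First, I would verify that $s_i \in B(t_j,\theta)$ for every $i \neq j$. This is immediate: the algorithm inserts the edge $(t_j,s_i)$ with $x(t_j,s_i)=0$, so $d(t_j,s_i) \leq 0 < \theta$.

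Second, I would show $s_j \notin B(t_j,\theta)$ by proving $d(t_j,s_j) \geq 1$. The only edges leaving $t_j$ are the zero-weight edges $(t_j,s_\ell)$ for $\ell \neq j$, and the nodes $t_1,\dots,t_k$ have no incoming edges in the augmented graph. Hence any $t_j$-to-$s_j$ path in the augmented graph has the form $t_j \to s_\ell \to \cdots \to s_j$ where the suffix from $s_\ell$ to $s_j$ is an original path in $G$, and thus an element of $\cP_{\ell j}$. The feasibility constraint $\sum_{e \in p} x_e \geq 1$ for every $p \in \cP_{\ell j}$ then gives $d(t_j,s_j) \geq 1 > \theta$.

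Third, suppose for contradiction that some directed path $P$ from $s_i$ to $s_j$ survives in $G - C(\theta)$. By the two steps above, $P$ starts inside $B(t_j,\theta)$ and ends outside it, so $P$ must traverse some edge of $\delta^+(B(t_j,\theta))$; but this edge lies in $C(\theta)$ by the definition $C(\theta)=\bigcup_{\ell=1}^k \delta^+(B(t_\ell,\theta))$, contradicting $P \subseteq G - C(\theta)$.

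There is no real obstacle here; the only subtlety to be careful about is ensuring that the distances used in the argument are those of the augmented graph (so that the $0$-weight edges out of $t_j$ are available for the first step) while the forbidden path $P$ lives in $G$ (so that the concluding contradiction is about original edges of $G$). Both are consistent because the added $t_\ell$ vertices have no incoming edges, so they do not create shortcuts among the $s_i$'s.
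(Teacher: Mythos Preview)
Your proof is correct and follows essentially the same approach as the paper: both arguments show that the ball $B(t_j,\theta)$ contains every $s_i$ with $i\neq j$ (via the zero-weight edges) but not $s_j$ (via feasibility of $\bx$), so $\delta^+(B(t_j,\theta))$ intercepts every $s_i\to s_j$ path. The paper's proof is simply a terser version of yours with the roles of $i$ and $j$ swapped in the indexing; your added care about the augmented-versus-original graph is a valid clarification but not a genuinely different idea.
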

\begin{proof}
  Fix any $i \in \{1,\ldots, k\}$ and $\theta \in (0,1)$. Since
  $d(t_i,s_j) = 0$ for all $j \neq i$, we have that $s_j \in
  B(t_i,\theta)$ for all $j \neq i$. Moreover, by feasibility of
  $\bx$, we have $d(t_i,s_i) \ge 1$ for otherwise there will be a path
  of length less than $1$ from some $s_j$ to $s_i$ where $j \neq i$.
  Therefore $s_i \not \in B(t_i,\theta)$ because $\theta < 1$.
  Therefore, $G - \delta^+(B(t_i,\theta))$ has no path from $s_j$ to
  $s_i$ for any $j \neq i$. Since $C(\theta) = \bigcup_i
  \delta^+(B(t_i,\theta))$, it follows that there is no path in $G -
  C(\theta)$ from $s_j$ to $s_i$ for any $j \neq i$.
\end{proof}

We now bound the probability that any fixed edge $e$ is cut by the
algorithm, that is, $\Pr[e \in C]$.  Note that $e$ may be {\em
  simultaneously} cut by several $t_i$ for the same value of $\theta$
but we are only interested in the probability that it is included in
$C$.

\begin{lemma}
  \label{lem:prob-e-cut}
  For any edge $e \in E$, $\Pr[e \in C] \le 2 x_e$.
\end{lemma}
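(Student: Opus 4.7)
For $e = (u,v)$, the algorithm places $e$ in $C$ iff some index $i$ satisfies $d(t_i,u) \le \theta < d(t_i,v)$, so $\Pr[e \in C]$ equals the Lebesgue measure of $\bigcup_{i=1}^k [d(t_i,u), d(t_i,v)) \cap (0,1)$. The plan is to bound this measure by $2 x_e$ via a short structural observation about the distances $d(t_i,\cdot)$.

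Since the only outgoing edges of $t_i$ are the zero-weight edges to $s_j$ with $j \neq i$, for every vertex $w$ we have $d(t_i,w) = \min_{j \neq i} d(s_j,w)$. The crucial consequence is that, as $i$ varies, $d(t_i,u)$ takes at most two distinct values: the minimum $a := \min_j d(s_j,u)$, attained for every $i \neq j_u^*$ (where $j_u^*$ achieves this minimum), and the second-smallest value $a' := \min_{j \neq j_u^*} d(s_j,u) \ge a$, attained at $i = j_u^*$. The analogous statement gives values $b \le b'$ and an argmin $j_v^*$ for $v$.

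I would then split on whether $j_u^* = j_v^*$. If equal, only two distinct intervals occur, namely $[a,b)$ and $[a',b')$. Applying the triangle inequality $d(s_j,v) \le d(s_j,u) + x_e$ to $j = j_u^*$ bounds $b - a \le x_e$, and applying it to the second-argmin index for $u$ gives $b' - a' \le x_e$; the total measure is then at most $2 x_e$. If unequal, the intervals collapse to the three shapes $[a,b)$, $[a',b)$, $[a,b')$, all of which lie inside $[a,b')$ because $a \le a'$ and $b \le b'$; here $j_u^*$ is eligible in the minimum defining $b'$ (since $j_u^* \neq j_v^*$), so the same triangle inequality gives $b' \le d(s_{j_u^*},v) \le d(s_{j_u^*},u) + x_e = a + x_e$, bounding the measure by $x_e$.

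The main obstacle is recognizing the two-values structure of $d(t_i,u)$, which is really what makes the union of $k$ intervals collapse to only one or two effective pieces; once that observation is isolated, the rest is a brief case analysis and a single application of the triangle inequality along the edge $e$.
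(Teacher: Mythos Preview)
Your proof is correct and rests on the same key observation as the paper: because $d(t_i,u) = \min_{j\neq i} d(s_j,u)$, the left endpoints $d(t_i,u)$ take only two distinct values as $i$ varies. The paper's execution is shorter, though: rather than also tracking the two values of $d(t_i,v)$ and splitting on whether $j_u^* = j_v^*$, it simply enlarges each interval $[d(t_i,u),\,d(t_i,v))$ to $[d(t_i,u),\,d(t_i,u)+x_e)$ via the triangle inequality, so the intervals depend only on $u$ and collapse immediately to two pieces of length $x_e$ without any case analysis.
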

\begin{proof}
  Let $e=(u,v)$. Rename the terminals such that $d(s_1,u) \leq
  d(s_2,u) \leq \dots \leq d(s_k,u)$. This implies that 
  $$d(t_1,u) = d(s_2,u)$$
  and 
  $$d(t_2,u) = d(t_3,u) = \ldots = d(t_k,u) = d(s_1,u).$$
  Edge $e \in \delta^+(B(t_i,\theta))$ if and only if $\theta \in [d(t_i,u),
  d(t_i,v))$; we have that $d(t_i,v) \le d(t_i,u) + x_e$. 
  Defining the interval $I_i$ as $[d(t_i,u),d(t_i,u)+x_e)$, we see
  that $e \in \delta^+(B(t_i,\theta))$ only if $\theta \in I_i$. 
  However, from the property that $d(t_2,u) = d(t_3,u) \ldots = d(t_k,u)$,
  $I_2 = I_3 = \ldots = I_k$. Thus, 
  $e \in C$ only if $\theta \in I_1$ or $\theta \in I_2$ and
  since $|I_1|$ and $|I_2|$ are both at most $x_e$ long and $\theta$
  is chosen uniformly at random from $(0,1)$,
  $$\Pr[e \in C] \le \Pr[\theta \in I_1] + \Pr[\theta \in I_2] \le 2 x_e.$$
\end{proof}

\begin{corollary}
  $\Ex[C]$, the expected cost of $C$, is at most $2 \sum_e w_e x_e$.
\end{corollary}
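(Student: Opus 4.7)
The proof is essentially immediate from Lemma~\ref{lem:prob-e-cut} via linearity of expectation, so my plan is short. I would write the (random) cost of the output as a sum of weighted indicator variables, one per edge:
\[
w(C) = \sum_{e \in E} w_e \cdot \mathbf{1}[e \in C].
\]
Taking expectation and using linearity, together with the fact that $\Ex[\mathbf{1}[e \in C]] = \Pr[e \in C]$, gives
\[
\Ex[w(C)] = \sum_{e \in E} w_e \cdot \Pr[e \in C].
\]

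Next I would plug in the bound $\Pr[e \in C] \le 2x_e$ established in Lemma~\ref{lem:prob-e-cut}, using that the weights $w_e$ are non-negative, to conclude
\[
\Ex[w(C)] \le \sum_{e \in E} w_e \cdot 2 x_e = 2 \sum_{e \in E} w_e x_e,
\]
which is the desired bound. There is no real obstacle here: the only non-trivial ingredient is Lemma~\ref{lem:prob-e-cut}, which has already been proved, and the rest is just linearity of expectation plus non-negativity of the weights. I would also remark that the corollary, combined with Lemma~\ref{lem:dir-feasibility} (feasibility for every $\theta$), establishes the randomized part of Theorem~\ref{thm:directed_cut_approximation}, leaving only the running time and the derandomization to be addressed separately.
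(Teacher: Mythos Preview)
Your proposal is correct and matches the paper's approach: the corollary is stated without proof precisely because it follows immediately from Lemma~\ref{lem:prob-e-cut} by linearity of expectation and non-negativity of the weights, exactly as you wrote. Your additional remark tying the corollary together with Lemma~\ref{lem:dir-feasibility} to yield the randomized part of Theorem~\ref{thm:directed_cut_approximation} is also accurate.
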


\paragraph{Running time analysis and derandomization:}
A natural implementation of Algorithm
\ref{alg:directed_cut_rounding_scheme} would first choose $\theta$ and
then compute $\delta^+(B(t_i,\theta))$ for each $i$.  This can be easily
accomplished via $k$ executions of Dijkstra's single-source shortest
path algorithm, one for each $t_i$, leading to a running time of
$O(k(m + n \log n))$ where $m = |E|$ and $n=|V|$. However, by taking
advantage of our analysis in Lemma~\ref{lem:prob-e-cut}, we can obtain
a run time that is equivalent to a single execution of Dijkstra's
algorithm.

Consider a slight variation of Algorithm
\ref{alg:directed_cut_rounding_scheme}. For each edge $e=(u,v)$,
define two intervals $I_1(e) = [d(s_1,u), d(s_1,u) + x_e)$ and $I_2(e)
= [d(s_1,u), d(s_1,u) + x_e)$, where $s_1,s_2$ are the two terminals
from which $u$ is the closest in terms of distance. We pick $\theta
\in (0,1)$ uniformly at random and include $e$ in $C$ iff $\theta \in
I_1(e)$ or $\theta \in I_2(e)$. The analysis in Lemmas
\ref{lem:dir-feasibility} and \ref{lem:prob-e-cut} shows that even
this modified algorithm outputs a feasible cut whose expected cost is at
most $2\sum_e w_e x_e$. Note that the edges cut by this modified
algorithm may be a strict superset of the edges cut by Algorithm
\ref{alg:directed_cut_rounding_scheme}. The advantage of the modified
algorithm is that we only need to calculate $I_1(e)$ and $I_2(e)$ for
each edge $e \in E$. To do this, for each node $u$, we need to find
the two terminals from which $u$ is the closest and their
corresponding distances. More formally, consider the following
$h$-nearest-terminal problem.

\begin{prob}
  Given a directed graph $G=(V,E)$ with non-negative edge-lengths, a set $S
  \subseteq V(G)$ of $k$ terminals, and an integer $h \le k$, for each
  vertex $v$, find the $h$ terminals from which $v$ is the closest
  among the terminals and their corresponding distances. In other
  words for each $v$ find the $h$ smallest values in 
  $d(s_1,v),d(s_2,v),\ldots,d(s_k,v)$ where $S = \{s_1,\ldots,s_k\}$.
\end{prob}

The above problem can be solved via a randomized algorithm using
hashing that runs in expected time $O(h (m + n \log n))$, which
corresponds to $h$ executions of Dijkstra's algorithm.  It can also be
solved in $O(h m\log h + h n \log n)$ time via a deterministic
algorithm. See \cite{HarPeled15} who refers to this as the
$h$-nearest-neighbors problem.

Using the algorithm for the $h$-nearest-terminal problem with $h=2$,
we can calculate $I_1(e)$ and $I_2(e)$ for each $e \in E$ in $O(m + n
\log n)$ time\footnote{One can easily derive the $h=2$ case from first
  principles also.}.  We then chose $\theta$ uniformly at random from
$(0,1)$ and cut $e$ if $\theta$ lies in one of the range $I_1(e)$ or
$I_2(e)$. This gives us a $2$-approximate randomized algorithm with
running time $O(m + n \log n)$.

We can derandomize the algorithm by computing the cheapest cut among
all $\theta \in (0,1)$ as follows. Once $I_1(e)$ and $I_2(e)$ are
computed for each $e$ we sort the $4m$ end points of these $2m$
intervals; let them be $\theta_1 \le \theta_2 \le \ldots \le
\theta_{4m}$.  We observe that it suffices to evaluate the cut value
at each of these values of $\theta$.  A simple scan of these $4m$
points while updating the cut-value at each end point can be
accomplished in $O(m)$ time. Sorting the end points takes $O(m \log
n)$ time.  This leads to a deterministic $2$-approximation algorithm
with running time $O(m \log n)$.

\subsection{\DirMC with $k=2$}
\label{sec:2-terminal-dir-mc}
In this section we address \DirMC with $k=2$ which we refer to as
\2DirMC. We believe this is an interesting problem on its own as it is
related closely to the classical $s$-$t$ cut problem.  As we remarked
earlier, \2DirMC is NP-Hard and APX-Hard to approximate. This was
shown in \cite{GargVY94,GargVY04} via a simple approximation preserving
reduction from \MC with $k=3$.  Another consequence of the reduction
is that the integrality gap of \DirMCRel for \2DirMC is at least
$4/3$. On the other hand no ratio better than $2$ is known for
\2DirMC. This naturally raises the following question.

\begin{question}
  What is the integrality gap of \DirMCRel for \2DirMC? What is
  the approximability of \2DirMC?
\end{question}

We obtain two theorems. The first one shows that the integrality gap 
for \2DirMC is $2$.

\begin{theorem}\label{thm:2-terminal-directed-integrality-gap}
  Integrality gap of \DirMCRel for \2DirMC is $2$ even in planar directed
  graphs.
\end{theorem}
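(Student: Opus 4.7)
The plan is to exhibit a family $\{G_n\}_{n\ge 1}$ of planar directed graphs with terminals $s,t$ for which $\opt(G_n)/\optcr(G_n) \to 2$ as $n\to\infty$. To build $G_n$ I would use a parameter-$n$ planar construction --- most likely $n$ small gadgets wired in series in a planar fashion, or a single bidirectional cycle-like object with carefully chosen weights --- engineered so that a common edge set $E'$ acts as a ``$2$-cover'' for both the $s$-$t$ paths and the $t$-$s$ paths simultaneously (every such path uses at least two edges of $E'$), while at the same time any integral edge set that disconnects both $s\to t$ and $t\to s$ must consist, up to lower-order terms, of a minimum $s$-$t$ cut together with a nearly edge-disjoint minimum $t$-$s$ cut.

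Given such a $G_n$, the upper bound on $\optcr(G_n)$ is immediate: set $x_e = 1/2$ for $e\in E'$ and $x_e=0$ otherwise, so that for every directed $s$-$t$ or $t$-$s$ path $p$ we have $\sum_{e\in p}x_e \ge 2\cdot\tfrac12=1$; the resulting fractional cost is $c(n) := \tfrac12\, w(E')$. The lower bound $\opt(G_n)\ge (2-o(1))\, c(n)$ would be proved combinatorially by showing that every integral multiway cut contains an $s$-$t$ cut of weight at least $c(n)-o(c(n))$ and, essentially disjointly, a $t$-$s$ cut of weight at least $c(n)-o(c(n))$; summing these two lower bounds gives the factor of $2$. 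Finally I would verify planarity of $G_n$ by exhibiting an explicit planar embedding.

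The main obstacle is the construction itself --- specifically, simultaneously achieving the integrality gap and planarity. The obvious approach, namely running the standard reduction from undirected \MC to \2DirMC on the tight-gap example for undirected multiway cut (a star with many leaves, which yields gap $2(1-1/k)\to 2$), fails here because the resulting directed graph contains a $K_{3,k}$-like subgraph and is non-planar for $k\ge 3$. The challenge is therefore either to redesign the reduction in a planarity-preserving way, or to give a direct planar construction in which a single half-integral edge set doubles as a fractional $s$-$t$ cut and a fractional $t$-$s$ cut, yet any integral cut is forced to pay (essentially) the full cost of both directions separately. Once the right $G_n$ is identified, the feasibility and cost computations should be routine, and planarity can be checked by inspection of the embedding.
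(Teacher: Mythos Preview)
Your proposal is not yet a proof: it is a description of the \emph{properties} a gap instance should have, together with an explicit admission that you do not have the construction. The entire content of the theorem lies precisely in exhibiting such a $G_n$, so until you produce one the argument has not started.

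Your intended lower-bound template is also where the real difficulty hides. Saying that any integral bi-cut ``contains an $s$-$t$ cut of weight $\ge c(n)-o(c(n))$ and, essentially disjointly, a $t$-$s$ cut of weight $\ge c(n)-o(c(n))$'' is not an argument: every feasible bi-cut \emph{is} simultaneously an $s$-$t$ cut and a $t$-$s$ cut, so the question is exactly whether a single edge set can serve both roles cheaply. Proving that it cannot --- that the two directions must be paid for nearly separately --- is the whole theorem, and ``$n$ gadgets wired in series'' does not obviously force this; in a series chain, cutting one link typically kills both directions at once.

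For comparison, the paper does not look for a single parametric family with a uniform half-integral solution on a $2$-covering edge set. It builds a recursive sequence $G_0,G_1,\ldots$ where $G_{i+1}$ is obtained by gluing two rescaled copies of $G_i$ at a center vertex $v$, adding two new vertices $v_1,v_2$ with unit-weight edges $(v_1,v)$ and $(v,v_2)$, and attaching both $s$ and $t$ to $v_1$ and $v_2$ via infinite-weight edges. The integrality gap $\alpha_i$ of $G_i$ then satisfies the recurrence $\alpha_{i+1}=(4-\alpha_i)/(3-\alpha_i)$ with $\alpha_0=1$, which converges to $2$. The optimal fractional solution on $G_{i+1}$ is obtained by halving the optimal fractional solution inside each copy of $G_i$ and putting $1/2$ on the two new unit edges (so it is \emph{not} globally half-integral); optimality is certified by an explicit matching multicommodity flow. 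The integral lower bound is a short case analysis on which of $(v_1,v),(v,v_2)$ are cut, using inductively the ratio between the one-way and two-way cut values in $G_i$. Planarity follows by induction because each $G_i$ embeds with its two terminals on the outer face. This recursive amplification --- tracking both the gap and the one-way/two-way cut ratio through the recursion --- is the missing idea in your plan.
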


The second theorem slightly extends a result in \cite{GargVY04}.

\begin{theorem}
\label{thm:2-terminal-hardness}
  There is an approximation preserving reduction from $4$-terminal
  \NodeMC to \2DirMC.
\end{theorem}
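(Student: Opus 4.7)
The plan is to construct, from any 4-terminal \NodeMC instance $(G, w)$ with terminals $r_1, r_2, r_3, r_4$, a \2DirMC instance $(G', w')$ of the same \opt value. The construction proceeds in two steps. First, I apply the standard node-splitting reduction: each $v \in V(G)$ becomes a pair of nodes $v^{in}, v^{out}$ joined by a directed edge $v^{in} \to v^{out}$ of weight $w(v)$, and each undirected edge $\{u,v\}$ is replaced by the two directed edges $u^{out} \to v^{in}$ and $v^{out} \to u^{in}$, both of weight $\infty$. Second, I add two new terminals $s$ and $t$ together with a small set of $\infty$-weight edges that attach them to the split copies of the four original terminals.

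The key combinatorial step is to choose the attachments to $s$ and $t$ so that the six unordered pairs among $\{r_1, r_2, r_3, r_4\}$ are exactly the pairs ``witnessed'' by $s \to t$ or $t \to s$ paths in $G'$. I exploit the fact that the edge set of $K_4$ can be covered by two copies of $K_{2,2}$: take $S_s = \{r_1, r_2\}$, $T_s = \{r_3, r_4\}$ and $S_t = \{r_1, r_3\}$, $T_t = \{r_2, r_4\}$, so that the first bipartition covers the four ``cross'' pairs and the second covers the remaining two pairs $(r_1, r_2)$ and $(r_3, r_4)$. I then add infinite-weight edges $s \to r_i^{in}$ for each $r_i \in S_s$, $r_j^{out} \to t$ for each $r_j \in T_s$, $t \to r_i^{in}$ for each $r_i \in S_t$, and $r_j^{out} \to s$ for each $r_j \in T_t$.

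For correctness I need two claims. First, any $s \to t$ directed path in $G'$ must begin with some $s \to r_i^{in}$ edge and end with some $r_j^{out} \to t$ edge, with $r_i \in S_s$ and $r_j \in T_s$, and its middle portion projects onto an undirected $r_i$-$r_j$ walk in $G$; conversely every such undirected walk lifts to an $s \to t$ path in $G'$. The analogous statement holds for $t \to s$ paths and $(S_t, T_t)$. Second, since the only finite-weight edges of $G'$ are the split edges $v^{in} \to v^{out}$, any finite-cost \2DirMC solution corresponds canonically to a node set in $G$ of the same total weight that separates all six pairs, and any \NodeMC solution lifts to a \2DirMC solution of the same cost. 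This gives equality of optima and transfers any $\alpha$-approximation in either direction. The one subtle point is that an $s \to t$ path in $G'$ might pass through other terminals before reaching $r_j^{out}$ (for instance $s \to r_1 \to \cdots \to r_2 \to \cdots \to r_3 \to t$), but since the corresponding \NodeMC solution separates every pair including $(r_1, r_3)$, all undirected $r_1$-$r_3$ walks in $G$ are broken, so no additional vertices must be cut beyond what the pair-separation already requires.
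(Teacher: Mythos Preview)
Your reduction is correct and follows the same idea as the paper's. The paper works directly with the node-weighted formulation of \DirMC (bidirect every edge of $G$, keep the node weights, and attach $s,t$ to the four terminals by infinite-weight arcs), whereas you pass to the edge-weighted version via the standard node-splitting; these are equivalent presentations. Your specific attachment pattern differs from the one in the paper's figure, but both are instances of the same combinatorial trick you identified explicitly: the edge set of $K_4$ is the union of two copies of $K_{2,2}$, so one bipartition is witnessed by $s\to t$ paths and the other by $t\to s$ paths, together forcing separation of all six terminal pairs. Your write-up is in fact more detailed than the paper's, which just states that ``a relatively simple case analysis'' establishes the equivalence of feasible cuts.
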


We raise the following question.

\begin{question}
  Can we prove a factor $2$ hardness of approximation for \DirMC under
  the assumption that $P \neq NP$? Does a factor of $2$ hardness hold
  for \2DirMC even under the Unique Games conjecture?
\end{question}

\paragraph{Integrality gap construction:} 
Proof of Theorem~\ref{thm:2-terminal-directed-integrality-gap}
is based on recursively defined sequence of graphs
$G_0,G_1,\ldots,G_h$ with increasing integrality gap; we will use
$\alpha_i$ to denote the integrality gap (we also refer to this
as the flow-cut gap) in $G_i$.  The two terminals
will be denoted by $s,t$. The symmetry in the construction will ensure
that in $G_i$ the $s$-$t$ cut value will be equal to the $t$-$s$ cut
value; we refer to these common values as the one-way cut value and
the optimum value of a cut that separates $s$ from $t$ and $t$ from
$s$ as the two-way cut value. The graph $G_0$ is shown in
Fig~\ref{fig:gap} and it is easy to see that $\alpha_0 = 1$.

\begin{figure}[htb]
\centering
\includegraphics[scale=0.8]{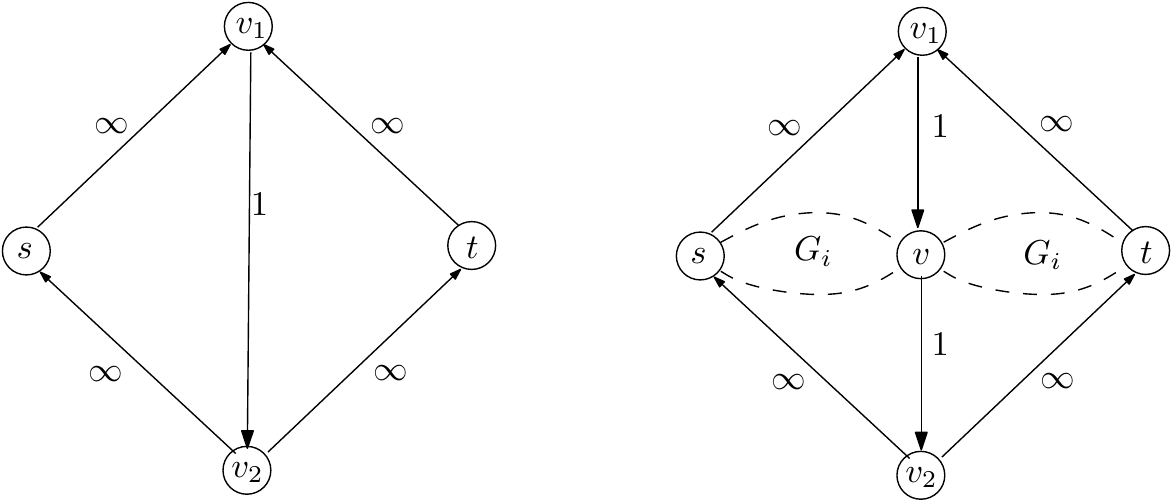}
\caption{$G_{0}$ on the left and constructing $G_{i+1}$ from $G_i$ shown
on the right.}
\label{fig:gap}
\end{figure}

The iterative construction of $G_{i+1}$ from $G_i$ is shown at a
high-level in figure \ref{fig:gap}. A formal description is as
follows. To obtain $G_{i+1}$ with terminals $s,t$ we start with two
copies of $G_i$ with terminals $s_1,t_1$ and $s_2,t_2$ (denoted 
by $H,H'$) and two new vertices $v_1,v_2$. We set $s=s_1$, $t=t_2$ and
identify $t_1$ and $s_1$ as the center vertex $v$ shown in the
figure. We add edges $(v_1,v)$ and $(v,v_2)$ with weight $1$ and four
other edges $\{(s,v_1),(t,v_1),(v_2,s),(v_2,t)\}$ each with weight
infinity. Finally we scale the weights of the edges of $H$ and $H'$
such that the two-way cut value in each of them is
$\frac{\alpha_i}{2-\alpha_i}$. It is easy to observe inductively that
the each graph in the sequence is planar and moreover the graph can be
embedded such that $s$ and $t$ are on the outer face.  The analysis of
the integrality gap of this construction can be found in the appendix.

Subsequent to our construction, Julia Chuzhoy obtained
an alternative non-recursive construction with an integrality gap of
$2$ for \2DirMC. 

\paragraph{Reduction from $4$-terminal \NodeMC to \2DirMC:} 
Given a \NodeMC instance with graph $G$ and set of terminals
$\{s_1,s_2,s_3,s_4\}$, Figure~\ref{fig:node_to_directed_reduction}
shows the ingredients of a reduction to \DirMC instance with graph
$G'$ and terminals $s,t$. This is a slight modification of the
reduction from three-terminal \MC to \2DirMC given in \cite{GargVY04}.
It is convenient to consider the node-weighted
version of \DirMC which is equivalent to the edge-weighted version.
Formally $G'$ is obtained from $G$ by the addition of two new nodes
$s,t$ which are connected to the terminals via directed edges of
infinite weight as shown in the figure. Each edge $uv \in E(G)$ is
replaced by two directed edges $(u,v)$ and $(v,u)$ and the weights of
the nodes of $G$ remain the same. We will assume without loss of generality
that the terminals $s_1,s_2,s_3,s_4$ have infinite weight.
A relatively simple case analysis shows that 
$C \subset V(G)$ is a feasible node-multiway cut for the terminals
$\{s_1,\ldots,s_4\}$ in $G$ iff $C$ is a feasible node-multiway cut in 
$G'$ for $\{s,t\}$. This type of reduction does not seem to generalize
beyond four terminals.

\begin{figure}[htb]
\centering
\includegraphics[scale=1]{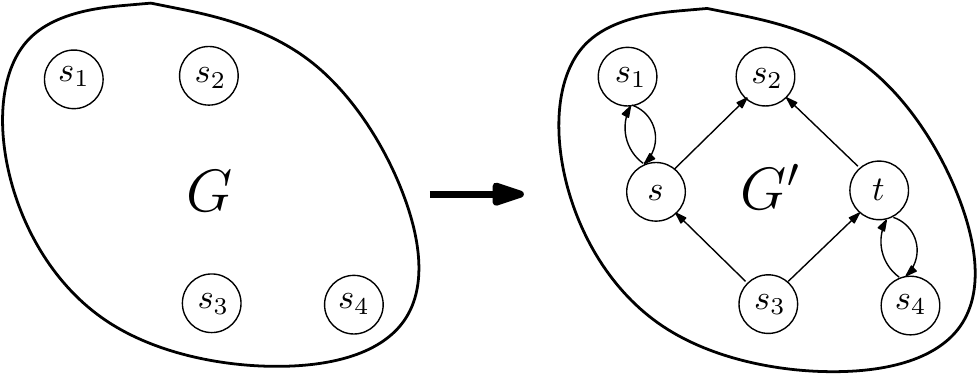}
\caption{Reduction from 4-terminal \NodeMC to \2DirMC. Non-terminal
vertices are not shown.}
\label{fig:node_to_directed_reduction}
\end{figure}

Garg \etal \cite{GargVY04} showed that \DirMCRel does not necessarily
have half-integral opitmum solutions. In
Section~\ref{sec:fractionality} we extend their example to show that
for every non-negative integer $\ell$ there exist instances for which 
there is no optimum solution to \DirMCRel that is $1/\ell$ integral.

\section{LP Relaxation and rounding for \NodeMC}
\label{sec:node-mc}

The LP relaxation for the \NodeMC
is similar to the one for \MC. We
have a variable $x_v \in \{0,1\}$ for each $v \in V$ which indicates
whether to remove $v$ or not. We can assume without loss of generality
that we cannot remove the terminals $s_1,s_2,\ldots,s_k$ and moreover
that they form an independent set. This can be accomplished by adding
to each original terminal $s_i$ a new dummy terminal $s'_i$ and
adding the edge $s_is'_i$. Let $\cP_{ij}$ be the set of all 
paths between $s_i$ and $s_j$ in $G$. Note that in the undirected graph
case we do not need to distinguish $P_{ij}$ from $P_{ji}$. Let $S = \{s_1,s_2,\dots,s_k\}$ be the set of terminals.
\begin{figure}[htb]
  \centering
\begin{boxedminipage}{0.5\linewidth}
\vspace{-0.2in}
\begin{align*}
& \textbf{\NodeMCRel}\\
\min \quad & \sum_{v \in V\setminus S} w_v x_v & \\
   & \sum_{v \in p} x_v & \ge 1 & \qquad p \in \cP_{ij}, i < j \\
   & x_v & = 0 & \qquad v \in S \\
&  x_v & \ge 0 & \qquad v \in V
\end{align*}
\end{boxedminipage}
  \caption{LP Relaxation for \NodeMC}
  \label{fig:nodemc-lp}
\end{figure}

\begin{theorem}\label{thm:node_cut_approximation}
  There is a polynomial-time randomized algorithm that given a
  feasible solution $\bx$ to {\sc \NodeMCRel} returns a feasible
  integral solution of expected cost at most $2 (1-1/k)\sum_v w_v
  x_v$, and runs in $O(m + n \log n)$ time.  The algorithm can be
  derandomized to yield a deterministic $2$-approximation algorithm
  that runs in $O(kn+ m + n\log n)$ time.
\end{theorem}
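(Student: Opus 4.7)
The plan is to adapt the ball-cutting algorithm of Section~\ref{sec:dir-mc} to the node-weighted setting, and to save an additional factor $1/k$ via a random permutation of the terminals. For each terminal $s_i$ and vertex $v$ let $d(s_i,v)$ denote the node-weighted shortest-path distance, summing $x_u$ over all nodes on the path (endpoints included); since terminals have weight zero, LP feasibility gives $d(s_i,s_j)\ge 1$, and the triangle inequality yields $d(s_i,v)+d(s_j,v)\ge d(s_i,s_j)+x_v\ge 1+x_v$ for $i\neq j$. Define the boundary interval $I_b(v):=[d(s_b,v)-x_v,\ d(s_b,v))$, which has length $x_v$. The randomized algorithm draws $\theta\in(0,1/2)$ uniformly and a uniform random permutation $\pi$ of $[k]$, and outputs
\[
  C := \bigl\{\,v\notin S :\ \theta\in I_b(v) \text{ for some } b\neq\pi(k)\,\bigr\}.
\]

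Feasibility is established by mirroring Lemma~\ref{lem:dir-feasibility}. For any $i\neq j$, reindex so that $\pi^{-1}(i)<k$; because $d(s_i,s_j)\ge 1>\theta$, every $s_i$-$s_j$ path must exit $B(s_i,\theta):=\{v:d(s_i,v)\le\theta\}$. The first vertex $v$ where this exit occurs has a predecessor $u\in B(s_i,\theta)$, so $d(s_i,v)\le d(s_i,u)+x_v\le\theta+x_v$, placing $\theta\in I_i(v)$ and hence $v\in C$.

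The crux is showing $\Pr[v\in C]\le 2(1-1/k)\,x_v$ for each non-terminal $v$. Conditioning on $\theta$ and averaging over $\pi$, the probability $\Pr_\pi[v\in C\mid\theta]$ equals $(1-1/k)$ when exactly one interval $I_b(v)$ contains $\theta$ and equals $1$ when two or more do (only one can coincide with $\pi(k)$). Setting $A:=\mathrm{meas}\{\theta\in(0,1/2):|\{b:\theta\in I_b(v)\}|=1\}$ and $B:=\mathrm{meas}\{\theta\in(0,1/2):|\{b:\theta\in I_b(v)\}|\ge 2\}$ gives $\Pr[v\in C]=2\bigl[(1-1/k)A+B\bigr]$, the leading $2$ arising from the density of $\theta$ on $(0,1/2)$. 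The main obstacle is the algebraic inequality $(1-1/k)A+B\le (1-1/k)\,x_v$. I will split cases on whether some $d(s_{b_0},v)\le 1/2$. If so, the LP forces $d(s_b,v)\ge 1/2+x_v$ for every other $b$, so only $I_{b_0}$ touches $(0,1/2)$ and $A=x_v,\,B=0$. Otherwise all $d(s_b,v)>1/2$; letting $D_{(1)}\le D_{(2)}$ be the two smallest distances, I will write $A+B=1/2-D_{(1)}+x_v$ and $B=\max(0,\,1/2-D_{(2)}+x_v)$, and reduce the target to $(k-1)D_{(1)}+D_{(2)}\ge k/2+x_v$; this follows by adding $(k-2)D_{(1)}\ge (k-2)/2$ (from $D_{(1)}>1/2$) to the LP bound $D_{(1)}+D_{(2)}\ge 1+x_v$. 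Summing $w_v\Pr[v\in C]$ over $v$ yields the claimed expected cost.

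For the running time, I will invoke the $h=2$ nearest-terminal subroutine from Section~\ref{sec:dir-mc} to compute for every $v$ the two closest terminals $c_1(v),c_2(v)$ and their distances in $O(m+n\log n)$ time. A short LP argument shows that whenever $v$ is on the boundary of some ball $B(s_i,\theta)$ it is also on the boundary of $B(s_{c_1(v)},\theta)$ and $B(s_{c_2(v)},\theta)$: indeed, $d(s_{c_1(v)},v)+d(s_i,v)\ge 1+x_v$ together with $d(s_i,v)\le\theta+x_v$ forces $d(s_{c_1(v)},v)\ge 1-\theta>1/2>\theta$, while $d(s_{c_1(v)},v)\le d(s_i,v)\le\theta+x_v$. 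Hence the algorithm may equivalently use only the intervals $I_{c_1(v)}(v)$ and $I_{c_2(v)}(v)$, so a sample $(\theta,\pi)$ determines $C$ in $O(n)$ time. For the deterministic variant, I will enumerate the $k$ choices of $\ell=\pi(k)$ and, for each $\ell$, sweep $\theta$ through the $O(n)$ sorted interval endpoints of the $I_{c_i(v)}$'s while maintaining the running cut cost; returning the cheapest cut encountered yields the bound claimed in the theorem in $O(kn+m+n\log n)$ total time.
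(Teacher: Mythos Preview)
Your proof is correct and takes essentially the same approach as the paper: your algorithm coincides with Algorithm~\ref{alg:node_cut_rounding_scheme} (your $\pi(k)$ is exactly the paper's randomly chosen index $\ell$, the permutation is otherwise unused), and your $A/B$ measure analysis is a mild repackaging of the two-case computation in Lemma~\ref{lem:node-cut-prob}. One small imprecision in the running-time paragraph: the statement that $v$ lying on the boundary of some $B(s_i,\theta)$ forces it onto the boundary of \emph{both} $B(s_{c_1(v)},\theta)$ and $B(s_{c_2(v)},\theta)$ is false when $i=c_1(v)$, and your justification only treats $c_1$; nonetheless the conclusion that the two intervals $I_{c_1(v)},I_{c_2(v)}$ suffice is correct (when $i\neq c_1(v)$ one also gets $\theta\in I_{c_2(v)}$ from $d(s_{c_2(v)},v)\le d(s_i,v)\le\theta+x_v$ together with $2d(s_{c_2(v)},v)\ge d(s_{c_1(v)},v)+d(s_{c_2(v)},v)\ge 1+x_v>2\theta$).
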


Let $\bx$ be a feasible fractional solution to \NodeMCRel.  For nodes
$u$ and $v$ we define $d_x(u,v)$ to be the length of the shortest path
between $u$ and $v$ according to the node weights given by $\bx$; we
count the weights of the end points $u$ and $v$ in $d_x(u,v)$. We omit
the subscript $x$ in subsequent discussion.  For a given radius $r$
and node $u$ let $B(u,r)$ be the set of all nodes $v$ such that
$d(u,v) \le r$; $B(u,r)$ is the ball of radius $r$ around $u$. 
We define the ``boundary'' of radius $r$ from $u$, denoted by
$B^+(u,r)$ to be the set of all nodes that are not in $B(u,r)$ but
have an edge to some node in $B(u,r)$.

\begin{prop}
  \label{prop:node-boundary}
  A node $v \in B^+(u,r)$ iff $r < d(u,v) \le r + x_v$. Further, 
  if $v \in B^+(u,r)$ for $r < 1$ then $x_v \neq 0$.
\end{prop}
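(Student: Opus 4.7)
The plan is to prove the biconditional in two halves and then observe that the second assertion drops out as an immediate consequence. Throughout, I will use the convention made in the paper that $d(u,v)$ counts the weights of both endpoints; this makes the accounting in the argument below transparent.

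For the forward direction, suppose $v \in B^+(u,r)$. By definition $v \notin B(u,r)$, which gives $d(u,v) > r$. Also by definition there is a node $w \in B(u,r)$ adjacent to $v$, so $d(u,w) \le r$. I would then concatenate a shortest $u$-$w$ path with the edge $wv$: the resulting walk from $u$ to $v$ has total node-weight at most $d(u,w) + x_v$, because the weight of $w$ is already charged in $d(u,w)$ and the only new vertex visited is $v$. Hence $d(u,v) \le d(u,w) + x_v \le r + x_v$, as required.

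For the reverse direction, suppose $r < d(u,v) \le r + x_v$. The first inequality immediately gives $v \notin B(u,r)$. To find a neighbor of $v$ inside $B(u,r)$, take any shortest $u$-$v$ path and let $w$ be the vertex preceding $v$ on it. By a standard shortest-path subpath argument, the $u$-$w$ prefix is itself a shortest $u$-$w$ path, and its length equals $d(u,v) - x_v$ (removing the last vertex $v$ subtracts exactly $x_v$ under our convention). Thus $d(u,w) \le d(u,v) - x_v \le r$, so $w \in B(u,r)$ and $v$ is adjacent to $w$; therefore $v \in B^+(u,r)$.

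For the final claim, observe that the chain $r < d(u,v) \le r + x_v$ forces $x_v > 0$ directly, and this conclusion does not actually need the hypothesis $r < 1$. The condition $r < 1$ is simply the regime in which the rounding scheme operates: since terminals have $x_v = 0$ and any two distinct terminals satisfy $d(s_i,s_j) \ge 1$ by LP feasibility, the proposition certifies that for $r < 1$ the boundary $B^+(u,r)$ of a terminal ball contains only non-terminals with strictly positive fractional weight, which is what the rounding analysis will later need. I do not anticipate any real obstacle; the only delicacy is bookkeeping the endpoint-inclusive node-weighted distance, and that is handled the same way in both directions.
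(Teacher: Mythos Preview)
Your argument is correct. The paper itself states this proposition without proof, treating it as an immediate consequence of the definitions of $B(u,r)$, $B^+(u,r)$, and the endpoint-inclusive node-weighted distance; there is therefore no paper proof to compare against. Your two-direction argument via the predecessor on a shortest $u$--$v$ path is exactly the natural way to justify it, and your observation that the hypothesis $r<1$ is not needed for the conclusion $x_v>0$ is also correct --- that hypothesis only records the regime in which the proposition is invoked.
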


Our rounding algorithm first picks an index $\ell$ uniformly at random
from $\{1,2,\ldots,k\}$. It then picks a $\theta$ uniformly at random
from $(0,1/2)$. For each $i \neq \ell$ it includes in the final cut $C$
all nodes $v$ that are in the ``boundary'' of the ball of radius
$\theta$ around $s_i$.  The formal description is given in
Algorithm~\ref{alg:node_cut_rounding_scheme}.

\begin{algorithm}[htb]
	\caption{Rounding for \NodeMC}
	\label{alg:node_cut_rounding_scheme}
	\begin{algorithmic}[1]
		\STATE Given feasible fractional solution $\bx$ to \NodeMCRel
		\STATE Chose $\ell \in \{1,2,\dots,k\}$ uniformly at random
		\STATE Pick $\theta \in (0,1/2)$ uniformly at random
                \STATE $C = \cup_{i \neq \ell} B^+(s_i,\theta)$
		\STATE Return $C$
	\end{algorithmic}
\end{algorithm}

Let $C(\ell,\theta)$ be the output of the algorithm for fixed $\ell$
and $\theta$.  We first argue that the algorithm always returns a
feasible multiway cut.
\begin{lemma}
For all $\ell,\theta$, $C(\ell,\theta)$ 
is a feasible multiway cut for the given instance. That is, $G - C(\ell,\theta)$
has no path from $s_i$ to $s_j$ for $i \neq j$.
\end{lemma}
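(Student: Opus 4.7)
The plan is to show that for each index $i \neq \ell$, the set $B^+(s_i,\theta)$ acts as a vertex separator that isolates the ball $B(s_i,\theta)$ from the rest of $V(G)$, and that this alone suffices to cut every pair of distinct terminals. Feasibility of $\bx$ gives $d(s_a,s_b) \geq 1$ for all $a \neq b$, so the choice $\theta \in (0,1/2)$ places each terminal strictly inside its own ball and strictly outside every other terminal's ball.

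First I would check that $C$ avoids all terminals. Since $x_{s_j} = 0$ for every $j$ and $\theta < 1$, the second part of Proposition~\ref{prop:node-boundary} rules out any terminal from lying in a set $B^+(s_i,\theta)$; hence $C \cap \{s_1,\ldots,s_k\} = \emptyset$ and the terminals all survive in $G - C$. Next, fix any pair $a \neq b$. At least one of $a,b$ differs from $\ell$, so after possibly swapping labels assume $a \neq \ell$. Then $d(s_a,s_a) = 0 < \theta$ places $s_a \in B(s_a,\theta)$, while $d(s_a,s_b) \geq 1 > \theta$ places $s_b \notin B(s_a,\theta)$.

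Now suppose for contradiction that some path $P$ from $s_a$ to $s_b$ survives in $G - C$. Traversing $P$ from $s_a$, there must be a first edge $\{u,u'\}$ with $u \in B(s_a,\theta)$ and $u' \notin B(s_a,\theta)$. By the definition of the boundary, $u' \in B^+(s_a,\theta)$, and since $a \neq \ell$ this gives $u' \in C$, contradicting $V(P) \cap C = \emptyset$. Hence no such $P$ exists and $s_a,s_b$ are separated in $G - C$.

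I do not expect a substantive obstacle here: once the boundary-set definition is unpacked, everything reduces to the gap $\theta < 1 \leq d(s_a,s_b)$ together with the fact that $x_v = 0$ on terminals. The only point worth being careful about is that the crossing vertex $u'$ is a genuine element of $C$ rather than an unremoved terminal, but this is automatic because boundary sets exclude terminals by Proposition~\ref{prop:node-boundary}. In particular, the random index $\ell$ is irrelevant for feasibility; it is only used later to control the expected cost and obtain the $2(1-1/k)$ factor.
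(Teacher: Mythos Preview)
Your proof is correct and follows essentially the same approach as the paper: pick the index $a \neq \ell$, note that $s_a \in B(s_a,\theta)$ while $s_b \notin B(s_a,\theta)$ by feasibility of $\bx$, and conclude that removing $B^+(s_a,\theta) \subseteq C$ disconnects $s_a$ from $s_b$. Your version is slightly more explicit in spelling out the crossing-edge argument and in verifying (via Proposition~\ref{prop:node-boundary}) that no terminal lands in $C$, but the underlying idea is identical.
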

\begin{proof}
  Consider any pair $i,j \in \{1,2,\ldots,k\}$ where $i \neq j$.
  Assume $i \neq \ell$, the case when $j \neq \ell$ is similar.
  The ball $B(s_i,\theta)$ does not contain $s_j$ since 
  $\theta < 1/2$ and $d(s_i,s_j) \ge 1$ by feasibility of $\bx$.
  $C(\ell,\theta)$ contains all nodes from $B^+(s_i,\theta)$, thus, in  $G-C(\ell,\theta)$ there
  cannot be a path from $s_i$ to any node in $V \setminus B(s_i,\theta)$,
  and hence to $s_j$.
\end{proof}

We say that $v$ is cut by the algorithm if $v \in C$.  The key to the
performance guarantee of the algorithm is the following lemma.
\begin{lemma}
  \label{lem:node-cut-prob}
  $\Pr[v \in C] \leq 2(1-1/k) x_v$.
\end{lemma}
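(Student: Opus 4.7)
My plan is to bound $\Pr[v\in C]$ by averaging first over the uniformly-chosen $\ell$ and then leveraging feasibility of $\bx$. Fix $v\notin S$ with $x_v>0$ and relabel the terminals so that $d_1:=d(s_1,v)\le d_2\le\cdots\le d_k$ where $d_i:=d(s_i,v)$. By Proposition~\ref{prop:node-boundary}, $v\in B^+(s_i,\theta)$ iff $\theta\in I_i:=[d_i-x_v,d_i)$; let $J_i:=I_i\cap(0,1/2)$. Since $\theta$ is uniform on $(0,1/2)$,
\[
  \Pr[v\in C] \;=\; \frac{1}{k}\sum_{\ell=1}^{k}\Pr[\theta\in U_\ell]
  \;=\; \frac{2}{k}\sum_{\ell=1}^{k}|U_\ell|,
  \qquad U_\ell:=\bigcup_{i\ne\ell}J_i.
\]

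The key feasibility consequence I would use is $d_i+d_j\ge 1+x_v$ for every $i\ne j$: concatenating shortest $s_i$--$v$ and $v$--$s_j$ paths gives an $s_i$--$s_j$ walk of length $d_i+d_j-x_v$ (since $v$'s weight appears in both $d_i$ and $d_j$), and feasibility of $\bx$ forces $d(s_i,s_j)\ge 1$. In particular this gives $d_i\ge (1+x_v)/2$ for every $i\ge 2$.

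I would then handle two cases based on $d_1$. In Case~A ($d_1\le 1/2$), the feasibility bound gives $d_i\ge 1+x_v-d_1\ge 1/2+x_v$ for $i\ge 2$, so $J_i=\emptyset$ for such $i$; hence $U_1=\emptyset$ and $U_\ell=J_1$ for every $\ell\ge 2$, giving $\Pr[v\in C]=\frac{k-1}{k}\cdot 2|J_1|\le 2(1-1/k)x_v$ since $|J_1|\le x_v$. In Case~B ($d_1>1/2$), every non-empty $J_i$ has the form $[d_i-x_v,1/2)$, so the $J_i$'s are nested $J_1\supseteq J_2\supseteq\cdots$; consequently $U_1=J_2$ and $U_\ell=J_1$ for every $\ell\ge 2$. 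Writing $y_i:=d_i-1/2$ so that $|J_i|=\max\{0,x_v-y_i\}$, the formula above becomes
\[
  \Pr[v\in C] \;=\; \frac{2}{k}\Bigl(\max\{0,x_v-y_2\}+(k-1)(x_v-y_1)\Bigr),
\]
and the desired bound $\Pr[v\in C]\le 2(1-1/k)x_v$ then reduces to the single inequality $(k-1)y_1+y_2\ge x_v$ (assuming $y_1<x_v$; otherwise $J_1=\emptyset$ and the bound is trivial).

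The main obstacle is verifying this last inequality. I would do it by combining the feasibility consequences $y_1\ge 0$, $y_i\ge x_v/2$ (from $d_i\ge (1+x_v)/2$) and $y_i\ge x_v-y_1$ (from $d_i\ge 1+x_v-d_1$) for every $i\ge 2$, and splitting on the size of $y_1$: if $y_1\le x_v/2$ then $y_2\ge x_v-y_1$ gives $(k-1)y_1+y_2\ge (k-2)y_1+x_v\ge x_v$, while if $y_1>x_v/2$ then $(k-1)y_1+y_2 > (k-1)x_v/2+x_v/2 = kx_v/2\ge x_v$. This closes the argument.
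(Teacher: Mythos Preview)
Your proof is correct and follows essentially the same approach as the paper: relabel by distance, use feasibility to get $d_1+d_2\ge 1+x_v$, observe the intervals $J_i$ are nested when $d_1>1/2$, and compute. The only cosmetic differences are that you split on $d_1$ versus $1/2$ while the paper splits on $d_2-x_v$ versus $1/2$, and your final sub-case split on $y_1\lessgtr x_v/2$ is unnecessary since $(k-1)y_1+y_2\ge y_1+y_2\ge x_v$ follows directly from feasibility and $y_1\ge 0$.
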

\begin{proof}
  Fix a node $v$ and rename the terminals such that $d(s_1,v) \leq
  d(s_2,v) \leq \dots \leq d(s_k,v)$. Define the interval $I_i$ as
  $[d(s_i,v) - x_v, \min(d(s_i,v),1/2))$. From the algorithm
  description and Proposition~\ref{prop:node-boundary}, we can see
  that $v \in C$ iff $\exists i$ such that $\ell \neq i$ and $\theta
  \in I_i$.

  Note that $I_i$ is an empty interval if $x_v = 0$ or $d(s_1,v) - x_v
  \geq 1/2$. Hence we can assume that $x_v > 0$ and $d(s_1,v) - x_v <
  1/2$, otherwise $I_i$ is empty for all $i$ and $\Pr[v \in C] =
  0$. We now consider two cases depending on
  whether $d(s_2,v) - x_v$ is greater than $1/2$ or not.

  First, consider the case when $d(s_2,v) - x_v \geq 1/2$. Interval
  $I_2$ is empty. Since $d(s_2,v) \leq d(s_3,v) \leq \dots \leq
  d(s_k,v)$, intervals $I_3,I_4,\dots,I_k$ are also empty. Hence, $v\in
  C$ iff $\ell \neq 1$ and $\theta \in I_1$. Interval $I_1$ has length at
  most $x_v$ and $\theta$ is chosen uniformly at random from
  $(0,1/2)$. Therefore,

\begin{equation*}
  \Pr[ v \in C] = \Pr[\ell \neq 1] \Pr[\theta \in I_1] \leq \big(1-\frac{1}{k}\big) \cdot 2x_v.
\end{equation*}
In the preceding equation we used independence in the choice of $\ell$ and 
$\theta$.

Next, consider the case when $d(s_2,v) -x_v < 1/2$. From the
feasibility of $\bx$, we have that $d(s_1,v) - x_v + d(s_2,v) \geq 1$
(recall that $d(s_1,v)$ and $d(s_2,v)$ include the length of
$x_v$). This implies that $d(s_1,v) \geq 1/2$. Since, $d(s_i,v) \geq
d(s_1,v)$ for all $i$, we have $d(s_i,v) \geq 1/2$ which implies that
for all $i$, $I_i =[d(s_i,v) - x_v, 1/2)$. Easy to see that $I_1
\supseteq I_2\dots \supseteq I_k$. Therefore, $v \in C$ iff $\ell = 1$
and $\theta \in I_2$ or $\ell \neq 1$ and $\theta \in I_1$. Length of
interval $I_1$ and $I_2$ are $1/2-d(s_1,v) + x_v$ and $1/2 - d(s_2,v)
+ x_v$ respectively. Hence,
\begin{eqnarray*}
\Pr[v \in C ] &=& \Pr[\ell = 1] \Pr[ \theta \in I_2] + \Pr[\ell \neq 1] \Pr[\theta \in I_1]\\
		& = & 1/k\cdot 2(1/2-d(s_2,v)+x_v) + (1-1/k)\cdot 2(1/2-d(s_1,v) + x_v) \\
		& \leq & 2(1-1/k)(1-d(s_1,v) -d(s_2,v)+ 2 x_v) \\
		& \leq & 2(1-1/k) x_v
\end{eqnarray*}
In the penultimate inequality above, we use the fact that $1-1/k \geq
1/k$ if $k \geq 2$. The final inequality follows from already stated
observation, $d(s_1,v) + d(s_2,v) - x_v \geq 1$ due to feasibility of
$\bx$.
\end{proof}

\begin{corollary}
  $\Ex[w(C)] = \sum_{v \in V} w_v \Pr[v \in C] \le 2(1-1/k) \sum_v w_v x_v$.
  Thus, the expected cost of the cut output by the algorithm is at most
  $2(1-1/k)$ times the cost of the fractional solution $\bx$.
\end{corollary}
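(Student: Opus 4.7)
The plan is to derive the corollary as an immediate consequence of Lemma~\ref{lem:node-cut-prob} via linearity of expectation. First I would write $w(C) = \sum_{v \in V} w_v \mathbf{1}[v \in C]$, where $\mathbf{1}[\cdot]$ is the indicator of the event that $v$ is included in the random set $C$ output by Algorithm~\ref{alg:node_cut_rounding_scheme}. Taking expectations over the random choices of $\ell$ and $\theta$, linearity gives
\[
\Ex[w(C)] = \sum_{v \in V} w_v \, \Pr[v \in C].
\]

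Next, I would apply Lemma~\ref{lem:node-cut-prob} term-by-term. The lemma states $\Pr[v \in C] \le 2(1-1/k)\, x_v$ for every $v \in V$. Since all weights $w_v$ are non-negative, this inequality can be multiplied through and summed to yield
\[
\sum_{v \in V} w_v \, \Pr[v \in C] \le 2\left(1-\tfrac{1}{k}\right) \sum_{v \in V} w_v x_v,
\]
which is exactly the required bound. A minor remark to note is that for $v \in S$ we have $x_v = 0$ by feasibility, and such terminals are excluded from $C$ by the algorithm (only boundary nodes of the balls are cut, and terminals themselves are centers), so they contribute $0$ to both sides and the sum over $V$ agrees with the objective value $\sum_{v \in V \setminus S} w_v x_v$ of \NodeMCRel.

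There is really no obstacle here, since the two nontrivial ingredients, namely feasibility of $C(\ell,\theta)$ and the per-node cut probability bound, have already been established. The only thing to be careful about is to ensure that linearity of expectation is being applied over the joint randomness in $(\ell,\theta)$, which is the same probability space under which Lemma~\ref{lem:node-cut-prob} was proved; this is indeed the case. Combining everything gives a randomized algorithm whose expected output cost is at most $2(1-1/k)$ times the LP value, matching the approximation ratio claimed in Theorem~\ref{thm:node_cut_approximation}.
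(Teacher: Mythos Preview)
Your proposal is correct and matches exactly what the paper intends: the corollary is stated without proof in the paper because it follows immediately from Lemma~\ref{lem:node-cut-prob} by linearity of expectation, which is precisely the argument you wrote out. Your additional remark about terminals contributing zero to both sides is a nice sanity check but not strictly needed.
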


\paragraph{Running time:} 
Algorithm \ref{alg:node_cut_rounding_scheme} can be implemented in
$O(m + n \log n)$ time, in a fashion very similar to the
implementation of the modified version of Algorithm
\ref{alg:directed_cut_rounding_scheme}. First, we pick $\ell$
uniformly at random from $\{1,\dots,k\}$ and $\theta$ uniformly at
random from $(0,1/2)$. Then, for each vertex $v$ we find the closest
terminal $s$ in the set $S \setminus \{s_\ell\}$ and cut vertex $v$ if
$d(s,v) - x_v \leq \theta < d(s,v)$. Finding nearest terminal for each
vertex can be done in $O(m + n \log n)$ time. Hence, we get a
randomized $2(1-1/k)$-approximation rounding scheme in time $O(m + n \log n)$.

To derandomize, we consider for each $v$ intervals $I_1(v)$ and
$I_2(v)$ as in the proof of Lemma~\ref{lem:node-cut-prob}. Using the
$h$-nearest terminal algorithm for $h = 2$ with $S$ as the set of
terminals, in $O(m + n \log n)$ time, we can compute $I_1(v)$ and
$I_2(v)$ for all $v$. We sort the $4n$ end points of these $2n$
intervals and let them be $\theta_1,\theta_2,\ldots,\theta_{4n}$.  It
suffices to find the cost of the cut for each $\theta$ from this $4n$
values and for each $\ell \in \{1,2,\ldots,k\}$.  We process these
sorted values in order and for each $\theta$, we calculate
$w(C(\ell,\theta))$ for all $\ell$.  The proof of
Lemma~\ref{lem:node-cut-prob} shows that this can be done by using
only $I_1(v)$ and $I_2(v)$ for all $v$.  As we process the end points
in the sorted order the time to update the cut for each $\ell$ per end
point is $O(1)$.  Thus, in $O(nk + m + n \log n)$ time we can obtain a
deterministic algorithm that gives a $2(1-1/k)$-approximation.

\paragraph{Acknowledgments:} CC thanks Sudeep Kamath, Sreeram Kannan and
Pramod Viswanath for extensive discussions on the problems considered
in \cite{KKCV15} which inspired us to revisit the rounding schemes
for multiway cut problems. CC also thanks Anupam Gupta
for discussion on some of the problems considered in \cite{KKCV15} during
an Oberwolfach workshop. 

\bibliographystyle{plain}
\bibliography{soda}

\appendix
\section{Proof of Theorem~\ref{thm:2-terminal-directed-integrality-gap}}
Here we prove the correctness of the integrality gap construction
described in Section~\ref{sec:2-terminal-dir-mc}.

The following proposition is easy to establish based on the symmetry in
the construction of the graphs.
\begin{prop}
  The $s$-$t$ cut value and the $t$-$s$ cut value in $G_{i+1}$ are the same.
\end{prop}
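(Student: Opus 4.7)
The plan is to prove, by induction on $i$, a symmetry property that immediately yields the proposition: $G_i$ admits a \emph{weight-preserving edge-reversing automorphism}, i.e., a bijection $\phi_i:V(G_i)\to V(G_i)$ such that $(u,v)$ is an arc of weight $w$ if and only if $(\phi_i(v),\phi_i(u))$ is an arc of weight $w$, and such that $\phi_i(s)=t$ and $\phi_i(t)=s$. Given such a $\phi_i$, the map $F\mapsto\{(\phi_i(v),\phi_i(u)):(u,v)\in F\}$ is a weight-preserving bijection between $s$-$t$ cuts and $t$-$s$ cuts of $G_i$, so the two min-cut values coincide. For the base case, the graph $G_0$ shown in Fig.~\ref{fig:gap} has such a symmetry by inspection.

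For the inductive step, let $H$ and $H'$ be the two copies of $G_i$ used in constructing $G_{i+1}$, and let $\psi:V(H)\to V(H')$ be the canonical bijection between them, so that $\psi(s_1)=s_2$ and $\psi(t_1)=t_2$. I would define
\[
\phi_{i+1}(u)=\begin{cases}
\psi(\phi_i(u)) & u\in V(H),\\
\psi^{-1}(\phi_i(u)) & u\in V(H'),\\
v_2 & u=v_1,\\
v_1 & u=v_2,
\end{cases}
\]
where $\phi_i$ is interpreted on each copy through the canonical identification with $G_i$. The two branches agree at the identified central vertex $v=t_1=s_2$, both mapping $v$ to itself, since $\psi(\phi_i(t_1))=\psi(s_1)=s_2=v$ and $\psi^{-1}(\phi_i(s_2))=\psi^{-1}(t_2)=t_1=v$. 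At the outer terminals, $\phi_{i+1}(s)=\phi_{i+1}(s_1)=\psi(t_1)=t_2=t$, and symmetrically $\phi_{i+1}(t)=s$.

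The edge-reversal and weight-preservation property for arcs inside $H$ (resp.\ $H'$) follows from the inductive hypothesis on $\phi_i$ combined with the fact that $\psi$ is a weight-preserving isomorphism between $H$ and $H'$ (both obtained by the same scaling factor $\tfrac{\alpha_i}{2-\alpha_i}$ applied to $G_i$). For the six newly added arcs one checks directly that they pair up correctly under the involution $v_1\leftrightarrow v_2$, $s\leftrightarrow t$: specifically, $(v_1,v)\leftrightarrow(v,v_2)$ (both weight $1$), and among the four infinite-weight arcs $(s,v_1)\leftrightarrow(v_2,t)$ and $(t,v_1)\leftrightarrow(v_2,s)$. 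This exhausts the edge set of $G_{i+1}$, so $\phi_{i+1}$ is the required automorphism, completing the induction. The main obstacle is purely bookkeeping: verifying that the piecewise definition is consistent at the shared vertex $v$ and that the six new arcs pair up correctly; no new idea beyond the built-in symmetry of the construction is needed.
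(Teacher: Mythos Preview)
Your proof is correct and is precisely a rigorous formalization of what the paper means by ``based on the symmetry in the construction'': the paper does not spell out any argument beyond that phrase, and your edge-reversing, weight-preserving automorphism swapping $s$ and $t$ is exactly the symmetry being invoked. The inductive construction of $\phi_{i+1}$ is clean, the consistency check at the identified vertex $v$ is correct, and the six added arcs pair up as you state.
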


Now, we calculate $\alpha_{i+1}$ in terms of $\alpha_i$. We refer to the
copy of $G_i$ containing $s$ and $v$ with scaled capacities as $H$, and the one 
containing $v$ and $t$ as $H'$.

\begin{lemma}
  For $i \ge 0$, $\alpha_{i+1} = \frac{4 - \alpha_i}{3 - \alpha_i}$. 
  For $i \ge 0$, the ratio of of
  the one-way cut value to the two-way cut value in $G_{i}$ is
  $\frac{1}{\alpha_{i}}$.
\end{lemma}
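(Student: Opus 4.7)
The plan is to prove both claims together by induction on $i$, the base case $i=0$ following from direct inspection of $G_0$ in the figure. I focus on the inductive step.

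Pulling the induction hypothesis through the construction's scaling, the embedded copy $H$ of $G_i$ inside $G_{i+1}$ has (integer) two-way cut value $\alpha_i/(2-\alpha_i)$ (by choice of scaling), (integer) one-way cut value $1/(2-\alpha_i)$ in each direction (from the inductive ratio $1/\alpha_i$), and fractional two-way cut value $1/(2-\alpha_i)$ (from the integrality gap being $\alpha_i$); the same holds for $H'$. Every $s$-to-$t$ path in $G_{i+1}$ splits at the central vertex $v$ into an ``$s$-side half'' (through $H$ or through the weight-$1$ edge $(v_1,v)$) and a ``$t$-side half'' (through $H'$ or through $(v,v_2)$), yielding four $s$-to-$t$ path families; symmetrically there are four $t$-to-$s$ families.

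I first compute the integer two-way cut $\tau_{i+1}$. The outer path $s\to v_1\to v\to v_2\to t$ forces at least one of the weight-$1$ edges $(v_1,v),(v,v_2)$ to be cut. A small case analysis on which outer edges are cut, combined with the cheapest way to hit the remaining path families using $H$/$H'$ cuts, yields $\tau_{i+1}=(4-\alpha_i)/(2-\alpha_i)$; this is attained, for instance, by one outer-edge cut together with a one-way cut in each of $H$ and $H'$ (cost $1+2/(2-\alpha_i)$), or by both outer edges cut plus the two-way cut in $H$ (cost $2+\alpha_i/(2-\alpha_i)$). For the fractional two-way cut $\phi_{i+1}$ I invoke LP duality: $\phi_{i+1}$ equals the two-commodity max-flow LP value for commodities $s\to t$ and $t\to s$. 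Decomposing any feasible flow into the four routing classes per commodity and summing the capacity constraints on $(v_1,v)$, $(v,v_2)$, $H$, and $H'$, the left-hand side is exactly twice the total flow and the right-hand side is $2+2/(2-\alpha_i)=2(3-\alpha_i)/(2-\alpha_i)$, so $\phi_{i+1}\le (3-\alpha_i)/(2-\alpha_i)$; equality is witnessed by explicitly sending $\tfrac12$ unit per commodity along the outer path and $\tfrac{1}{2(2-\alpha_i)}$ units per commodity through $H$ then $H'$ (feasibility in $H,H'$ follows from their symmetry plus convexity of the $2$-commodity flow polytope).

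Dividing gives $\alpha_{i+1}=\tau_{i+1}/\phi_{i+1}=(4-\alpha_i)/(3-\alpha_i)$. For the second claim, the integer one-way cut $\pi_{i+1}$ is obtained by an analogous (easier) enumeration: $\pi_{i+1}=1+1/(2-\alpha_i)=(3-\alpha_i)/(2-\alpha_i)$, common to both directions by the symmetry proposition. Thus $\pi_{i+1}/\tau_{i+1}=(3-\alpha_i)/(4-\alpha_i)=1/\alpha_{i+1}$, completing the induction.

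The main obstacle is the integer two-way cut: the tempting shortcut ``two-way cut inside $H$ plus one outer edge'' is infeasible because it fails to block paths that route one half through an outer vertex and the other half through $H'$ (e.g.\ $t\to v\to v_2\to s$ with the $t\to v$ leg through $H'$). One has to enumerate the few candidate cut structures carefully to verify that $(4-\alpha_i)/(2-\alpha_i)$ is indeed optimal.
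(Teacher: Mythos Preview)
Your proposal is correct and follows essentially the same inductive scheme as the paper: compute the integer two-way cut by a case split on which of $(v_1,v),(v,v_2)$ are removed, compute the one-way cut, compute the LP optimum, and take ratios. The only noteworthy difference is how you certify the LP value $(3-\alpha_i)/(2-\alpha_i)$: the paper exhibits an explicit feasible fractional cut (assign $1/2$ to each outer edge and a $1/2$-scaled optimum of \DirMCRel on $G_i$ inside $H,H'$) together with a matching flow that sends one full unit along the outer path and the optimal $H$-flow duplicated in $H'$; you instead bound the max two-commodity flow from above by summing the four capacity constraints (on $(v_1,v)$, $(v,v_2)$, $H$, $H'$) so that each path type is counted exactly twice, and then exhibit a symmetric flow, appealing to the graph's $s\leftrightarrow t$ symmetry to get equal flow in each direction inside $H,H'$. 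Both arguments are valid; your summing trick is a touch slicker for the upper bound, while the paper's asymmetric flow avoids the extra symmetry step.
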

\begin{proof}
  Proof by induction on $i$. For the base case we see that $\alpha_0 =
  1$ and in $G_0$ the one-way cut value and two-way cut value are both
  $1$ and hence the ratio is equal to $1 = \frac{1}{\alpha_0}$.
  
  We now prove the induction step. For this purpose we estimate the
  one-way cut value and the two-way cut value in $G_{i+1}$.

  \noindent
  \textbf{Minimum two-way cut:} Any finite value cut that separates
  $s$ from $t$ has to cut at least one of the two edges $(v_1,v), (v,v_2)$.
  We consider two cases.
  
  \noindent\textbf{Case 1:} Both $(v_1,v), (v,v_2)$ are cut. 
  To separate $s$ and $t$ it is best to pick a two-way cut between $s$
  and $v$ in $H$ (or symmetrically between $v$ and $t$ in $H'$).
  Thus the total cost is $2 + \frac{\alpha_i}{2-\alpha_i} = \frac{4 -
    \alpha_i}{2-\alpha_i}$.
  
  \noindent\textbf{Case 2:} Only one of the edges  $(v_1,v), (v,v_2)$ is cut.
  Without loss of generality this edge is $(v,v_2)$. Since $(v_1,v)$
  is not cut $s$ and $t$ can reach $v$ via $v_1$. Thus any two-way cut
  in $G$ needs to use a one-way cut in $H$ to separate $v$ from $s$
  and a one-way cut in $H'$ to separate $v$ from $t$. The cost of each
  of these one-way cuts is, by induction, $\frac{1}{\alpha_i} \cdot
  \frac{\alpha_i}{2-\alpha_i} = \frac{1}{2-\alpha_i}$. Thus the total
  cost is $1 + \frac{2}{2-\alpha_i} = \frac{4 -
    \alpha_i}{2-\alpha_i}$.

  In both cases the cost is the same and hence the optimal two-way cut in
  $G_{i+1}$ is $\frac{4 - \alpha_i}{2-\alpha_i}$.

  \noindent\textbf{Minimum one-way cut:} We now calculate one-way cut
  from $s$ to $t$. At least one of the edges $(v_1,v), (v,v_2)$ has to
  be cut. Also, either there is no path from $s$ to $v$ or no path from $v$ to
  $t$. Thus, the cost of the one-way cut from $s$ to $t$ is at least $1
  + \frac{1}{2-\alpha_i} = \frac{3-\alpha_i}{2-\alpha_i}$. Moreover it
  is easy to see that this is achievable by removing $(v_1,v)$ and
  one-way cut from $s$ to $v$ in $H$.

  \noindent\textbf{Optimum fractional solution value:} We now 
  calculate the optimum for \DirMCRel on $G_{i+1}$. We consider the
  following feasible solution $x$. Assign $0$ to the infinite weight
  edges and $1/2$ to each of edges $(v_1,v)$ and $(v,v_2)$. For the
  edges in the graphs $H$ and $H'$ we take an optimum solution $y$ to
  \DirMCRel on $G_i$ and scale it down by $1/2$ and assign these
  values to the edges of $H$ and $H'$. Feasibility of $y$ for $G_i$
  implies that distance from $s$ to $v$ and $v$ to $s$ in $H$
  according to $x$ is $1/2$ (since we scaled down by $1/2$). It is
  easy to verify that distance of $s$ to $t$ and from $t$ to $s$ is
  $1$ in the fractional solution $x$ in $G_{i+1}$. Now we analyze the
  cost of this solution $\sum_{e \in E(G_{i+1})} w_e x_e$.  We have a
  total contribution of $1$ from the two edges $(v_1,v)$ and
  $(v,v_2)$.  We claim that $\sum_{e \in E(H)} w_e x_e = \frac{1}{2}
  \cdot \frac{1}{\alpha_i} \cdot \frac{\alpha_i}{2-\alpha_i}$ since
  the cost of the two-way cut in $H$ is chosen to be
  $\frac{\alpha_i}{2-\alpha_i}$, the integrality gap is
  $\alpha_i$ and we scaled down $y$ by $1/2$ to obtain $x$
  in $H$. Same holds for $H'$.  Thus the total fractional cost of this
  solution is $1 + \frac{1}{2-\alpha_i} = \frac{3- \alpha_i}{2-
    \alpha_i}$. We can see that this is an optimum solution by
  exhibiting a multicommodity flow of the same value for the pairs
  $(s,t)$ and $(t,s)$ in $G_{i+1}$. Route one unit of flow from $s$ to
  $t$ along the path $s \rightarrow v_1 \rightarrow v \rightarrow v_2
  \rightarrow t$. In $H$ there exists a feasible flow of total value
  $\frac{1}{\alpha_i} \cdot \frac{\alpha_i}{2-\alpha_i} =
  \frac{1}{2-\alpha_i}$. Let $f(s,v)$ and $f(v,s)$ be the amount of
  flow from $s$ to $v$ and $v$ to $s$ respectively. By duplicating this
  flow in $H'$ we see that a flow of value $\frac{1}{2-\alpha_i}$
  exists between $s$ and $t$ in $G_{i+1}$ via $H$ and $H'$. Thus there
  is a total flow of value at least $1 + \frac{1}{2-\alpha_i}$ in $G_{i+1}$
  and this is optimal.

  \medskip We can now put together the preceding bounds to prove the
  lemma.  The flow-cut gap in $G_{i+1}$ is seen to be the ration of
  the two-way cut value $\frac{4 - \alpha_i}{2-\alpha_i}$ and the
  maximum flow value $\frac{3- \alpha_i}{2- \alpha_i}$. Hence
  $\alpha_{i+1} = \frac{4 - \alpha_i}{3 - \alpha_i}$ as desired.  The
  ratio of one-way cut value $\frac{3 - \alpha_i}{2-\alpha_i}$ and the
  two-way cut value $\frac{4 - \alpha_i}{2-\alpha_i}$ in $G_{i+1}$ is
  $\frac{3 - \alpha_i}{4 - \alpha_i}$ which is equal to
  $\frac{1}{\alpha_{i+1}}$. This completes the inductive proof.
\end{proof}

\medskip
We have a sequence of numbers $\alpha_i$ where $\alpha_0 = 1$ and
$\alpha_{i+1} = \frac{4-\alpha_i}{3 - \alpha_i}$. It is easy to argue
that this sequence converges to $2$. This proves that the integrality
gap of \DirMCRel is in the limit equal to $2$.

\section{Fractionality of the LP solutions}
\label{sec:fractionality}
It was shown in \cite{GargVY04} that there is a half-integral optimum
solution for the natural LP relaxation for node-weighted multiway cut
(\NodeMC) which was then exploited to obtain a
$2(1-1/k)$-approximation.  \cite{GargVY04} also showed that the
half-integral property does not hold for \2DirMC. Here we generalize
their example to observe that for any positive integer $\ell$ there are
examples where there may not exist an optimum solution to \DirMCRel on
instances with two terminals that is $1/\ell$ integral. More generally,
there does not exists an edge with length more than $1/\ell$.

Consider the generalization of the example in \cite{GargVY04} as shown
in Fig~\ref{fig:2-terminal-non-half-integral-example}.  Each flow path
from $s$ to $t$ or $t$ to $s$ has to use at least $h$ edges of the
type $(u_i,u_{i+1})$ or $(v_j,v_{j+1})$. Since, there are only
$2(h-1)$ such edges, flow is upper bounded by $2(h-1)/h$. To see that
this flow is also achievable, consider the following sets of
paths. For $1\leq i \leq h-1$, path $P_i = s, u_1,\ldots,u_{i+1},
v_i,\ldots,v_h,t$ and path $P_i' =
t,v_1,\ldots,v_{i+1},u_i,\ldots,u_h,s$. Send $1/h$ unit of flow along
each of these paths. Each of the edge $(u_j,u_{j+1})$ is part of $P_i$
for $i \geq j$ and part of $P_i'$ for $i \leq h-j$. Hence, capacity
used for edge $(u_i,u_{i+1})$ is $h\cdot 1/h=1$. Similarly for each
edge $(v_i,v_{i+1})$. Flow value is equal to $2(h-1)/h$. So, optimum
solution has value $2(h-1)/h$.

\begin{figure}[hbt]
\centering
\includegraphics{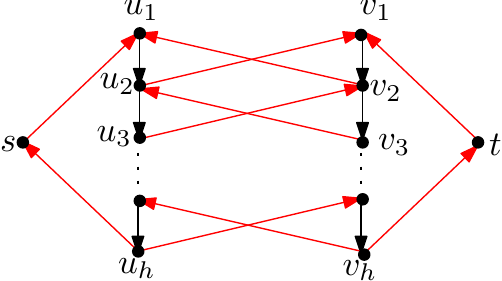}
\caption{Edges of the form $(u_i,u_{i+1})$ or $(v_j,v_{j+1})$ have capacity $1$
  and rest have infinite capacity. Optimal fractional cut/flow is $2(1-1/h)$.}
\label{fig:2-terminal-non-half-integral-example}
\end{figure}

By strong duality, optimal value of \DirMCRel is equal to maximum flow which is
equal to $2(h-1)/h$. Let $x$ be an optimal solution to the
\DirMCRel. By feasibility of the solution, each of the paths $P_i$ and
$P_i'$ has length at least $1$. Summing up the lengths of path $P_i$
and $P_i'$, we get $\left(\sum_{j=1}^{h-1} (x(u_j,u_{j+1}) +
  x(v_j,v_{j+1}))\right) + x(u_i,u_{i+1}) + x(v_i,v_{i+1}) \geq 2$. By
optimality of the solution first term is equal to
$2(h-1)/h$. Therefore, $x(u_i,u_{i+1}) + x(v_i,v_{i+1}) \geq
2/h$. Since, this inequality holds for all $1\leq i \leq h-1$, and
$\sum_{j = 1}^{h-1} (x(u_j,u_{j+1}) + x(v_j,v_{j+1})) = 2(h-1)/h$, we get that all
the inequalities are tight and $x(u_i,u_{i+1}) + x(v_i,v_{i+1}) =
2/h$. Since, all lengths are non-negative, $x(u_i,u_{i+1}),
x(v_i,v_{i+1}) \leq 2/h$. By taking $h > 2\ell$, we get an instance where
optimal solution has no edge having length at least $1/\ell$.

\end{document}